\documentclass[a4paper,onecolumn,11pt,unpublished]{quantumarticle}
\pdfoutput=1
\usepackage[utf8]{inputenc}
\usepackage[english]{babel}
\usepackage[T1]{fontenc}
\usepackage{amsmath}
\usepackage{hyperref}
\usepackage{amssymb}
\usepackage{mathtools}
\usepackage{braket}
\usepackage{thmtools}
\usepackage{booktabs}
\usepackage{multirow}

\usepackage{tikz}
\usetikzlibrary{arrows.meta}
\usepackage{lipsum}

\usepackage{amsthm}
\newtheorem{theorem}{Theorem}
\newtheorem{lemma}[theorem]{Lemma}
\newtheorem{corollary}[theorem]{Corollary}
\newtheorem{definition}[theorem]{Definition}

\newcommand{\ent}{\mathcal{H}}

\newcommand{\maintheoremtext}{
    Let $H$ be an $n$-qubit $k$-local Hamiltonian and $E_d$ be defined as in Equation~(\ref{eq:ed_def}). For any $\varepsilon =\Omega(1/\mathrm{poly}(n))$, there exists a quantum algorithm that, with high probability, prepares an $n$-qubit quantum state $\rho$ satisfying
    $$ \operatorname{Tr}[H\rho] \le E_d+\varepsilon M $$
    and outputs a corresponding energy estimate $\widehat{E}$ lying within the window
    $$ \widehat{E} \in [\lambda_0,\ E_d+\varepsilon M]. $$
    The running time of the algorithm is
    $$O^*\left( 2^{\frac{n}{2}\left(1 - \frac{1}{2}\ent\left(\frac{\varepsilon }{2^{d+2}k}\right)\right)} \right), $$
    where $\ent(\cdot)$ denotes the binary entropy function.
}
\newtheorem*{reptheorem}{Theorem \ref{th:main_intro} (repeated)}

\begin{document}
	\author{Ranitha Mataraarachchi}
	\affiliation{Graduate School of Mathematics, Nagoya University, Japan}
	\orcid{0009-0003-6053-6667}
    
	\title{An Entropy-Governed Speedup for Quantum Algorithms on Local Hamiltonians}
	\author{François Le Gall}
	\affiliation{Graduate School of Mathematics, Nagoya University, Japan}

	\author{Suguru Tamaki}
	\affiliation{Graduate School of Information Science, University of Hyogo, Japan}
	\maketitle
	
	\begin{abstract}       
        Low-energy estimation and state preparation for general $k$-local Hamiltonians are fundamental challenges in quantum complexity theory. For constant relative accuracy, Buhrman et al.~(PRL 2025) recently broke the natural Grover bound $O(2^{n/2})$, where $n$ denotes the number of qubits, for both problems. In this paper, for any sufficiently small parameter $d\ge 0$, we present an even faster quantum algorithm that outputs a quantum state with energy bounded by the minimum energy over all depth-$d$ states (i.e., states obtained by applying a depth-$d$ circuit to the all-zero state), together with an estimate of this energy. For the class of Hamiltonians with depth-$d$ ground states, our algorithm furthermore achieves exactly the same energy guarantees as Buhrman et al. Our results also provide insight into the distinction between strongly entangled states and those admitting efficient classical descriptions.
	\end{abstract}
	
	\section{Introduction}
	
		The study of quantum many-body systems is fundamentally related to the properties of local Hamiltonians. Estimating the ground-state energy and preparing low-energy states of such systems are central tasks in condensed matter physics, quantum chemistry, and quantum information. Consider an $n$-qubit $k$-local Hamiltonian 
	$$H = \sum_{\alpha=1}^m h_\alpha,$$ 
	where each interaction term $h_\alpha$ is a Hermitian operator acting on at most $k$ qubits, for some constant $k$. Let $\{\lambda_i\}_{i=0}^{2^{n}-1}$ denote the eigenvalues of $H$ arranged in non-decreasing order, with $\lambda_0$ being the ground-state energy. For systems with strict geometric locality, such as spins on a low-dimensional lattice, constant-factor approximations to the ground-state energy can often be found efficiently by severing the system into small patches \cite{bravyi}. However, these techniques fail for systems with long-range couplings. In this work, we consider general all-to-all interacting Hamiltonians and do not rely on any assumptions of geometric locality or underlying spatial structure.
	
	Estimating the ground state energy and, more generally, preparing the ground state, of such general $k$-local Hamiltonian is a central problem in quantum complexity theory. It is well-established that determining $\lambda_0$ up to inverse-polynomial precision is QMA-hard~\cite{qma1,qma2,qma5,qma3,qma4}, meaning that even a quantum computer is unlikely to solve it efficiently in worst-case scenarios. While heuristic variational algorithms, such as the Variational Quantum Eigensolver (VQE)~\cite{vqe}, can be executed in polynomial time, they are fundamentally local optimization methods. As such, their optimization landscapes are highly susceptible to barren plateaus and getting trapped in local minima, offering no rigorous convergence guarantees for general many-body systems~\cite{barren,vqeNP}.
	
	This highlights the need for quantum algorithms with provable asymptotic speedups for low-energy estimation and state preparation. To formalize the notion of estimation, define the total interaction strength of the Hamiltonian as 
    $$M \coloneqq \sum_{\alpha=1}^m \Vert h_\alpha \Vert,$$ 	
    where $\Vert\cdot\Vert$ denotes the spectral norm. In this work, we focus on relative accuracy (also called extensive accuracy): the goal is to prepare a state and estimate its energy, targeting a value bounded by a threshold plus an error $\varepsilon M$. Approximating ground states up to a relative error represents a physically relevant regime for macroscopic systems. This regime is also closely connected to the approximation scale considered in the Quantum PCP conjecture \cite{pcp, bh}.
	
	In the absence of a specific Hamiltonian structure, the most direct approach to preparing a low-energy state relies on unstructured quantum search (Grover's algorithm), which operates in time $O^\ast(2^{n/2})$ \cite{grover, brassard, grov1, grov2, grov3}.\footnote{In this paper, the notation $O^\ast(\cdot)$ removes the polynomial factors in $n$.} Breaking this natural Grover speedup barrier for general local Hamiltonians was recently achieved by the work of Buhrman et al.~\cite{buhr1}. This work established the first quantum algorithms to estimate the ground energy and prepare a state near that energy, up to a relative error, strictly faster than $O^\ast(2^{n/2})$:
	
	\begin{theorem}[Buhrman et al.~\cite{buhr1}]\label{th:buhrman}
		Let $H$ be an $n$-qubit $k$-local Hamiltonian. For any $\varepsilon =\Omega(1/\mathrm{poly}(n))$, there exists a quantum algorithm that, with high probability, prepares an $n$-qubit quantum state $\rho$ satisfying
        $$ \operatorname{Tr}[H\rho] \le \lambda_0+\varepsilon M $$
        and outputs a corresponding energy estimate $\widehat{E}$ lying within the window
        $$ \widehat{E} \in [\lambda_0,\ \lambda_0+\varepsilon M]. $$
        The running time of the algorithm is
        $$O^*\left( 2^{\frac{n}{2}\left(1 - \frac{\varepsilon}{2k+\varepsilon}\right)}\right).\footnote{In the case of low-energy estimation, \cite{buhr1} also shows how to improve the term $\frac{\varepsilon}{2k+\varepsilon}$ to $\frac{\varepsilon}{k+\varepsilon}$, at the price of slightly increasing the window.}$$
	\end{theorem}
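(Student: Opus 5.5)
The plan is to discard a small, well-chosen set of qubits. This costs only a small fraction of the interaction weight, and the remaining problem is then solved by a Grover-type ground-state search on fewer than $n$ qubits. Fix $p\in(0,1)$ to be tuned later, and let $S\subseteq[n]$ with $|S|=pn$.

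\textbf{Choosing $S$.} A term $h_\alpha$ acts on at most $k$ qubits, so for a uniformly random $S$ of size $pn$ it touches $S$ with probability at most $kp$. Hence
$$\mathbb{E}_S\Big[\sum_{\alpha:\,\mathrm{supp}(h_\alpha)\cap S\neq\emptyset}\|h_\alpha\|\Big]\le kpM,$$
and some $S$ (findable deterministically, or by sampling with a Markov bound) has touched weight $w_S\le kpM$. Define the reduced $(1-p)n$-qubit Hamiltonian
$$H' \coloneqq \operatorname{Tr}_S\!\Big[\big(\tfrac{I_S}{2^{|S|}}\otimes I_{S^c}\big)H\Big].$$
This is again a $k$-local Hamiltonian. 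It satisfies $\operatorname{Tr}[H'\phi]=\operatorname{Tr}\big[H\,(\tfrac{I_S}{2^{|S|}}\otimes\phi)\big]$ for every state $\phi$ on $S^c$.

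\textbf{Energy comparison.} Let $\psi$ be a ground state of $H$ and $\phi_0=\operatorname{Tr}_S\psi$. Terms not touching $S$ have identical expectation under $\psi$ and under $\tfrac{I}{2^{|S|}}\otimes\phi_0$. Each touched term changes by at most $2\|h_\alpha\|$. Therefore
$$\lambda_0(H')\le \operatorname{Tr}[H'\phi_0]\le \lambda_0+2w_S\le \lambda_0+2kpM.$$
Moreover, every state of the form $\tfrac{I}{2^{|S|}}\otimes\phi$ is a legitimate $n$-qubit state. Its $H$-energy equals its $H'$-energy and is thus automatically at least $\lambda_0$. This is what will give the lower end of the window $\widehat{E}\ge\lambda_0$.

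\textbf{Search on $S^c$.} Run the standard Grover-type ground-energy algorithm on $H'$: phase estimation of $e^{iH'}$ with precision $\delta M$, starting from the maximally mixed state on $S^c$. Combine this with amplitude amplification and a binary search over the threshold. This costs $O^*\big(2^{(1-p)n/2}\big)$ when $\delta=\Omega(1/\mathrm{poly}(n))$. It returns a state $\phi$ with $\operatorname{Tr}[H'\phi]\le\lambda_0(H')+\delta M$ together with an estimate of that energy. We output $\rho=\tfrac{I}{2^{|S|}}\otimes\phi$. For the estimate $\widehat{E}$, take the phase-estimation value rounded upward by the precision slack. Then $\widehat{E}\ge\operatorname{Tr}[H\rho]\ge\lambda_0$, and $\widehat{E}\le\lambda_0+2kpM+O(\delta M)$.

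\textbf{Balancing the parameters (main obstacle).} It remains to split the error budget so that $2kp+O(\delta)\le\varepsilon$, and to track how the phase-estimation slack interacts with the amplification threshold. The upper end of the window must hold while the lower end $\widehat{E}\ge\lambda_0$ is preserved. A naive split gives $p\approx\varepsilon/(2k)$ up to lower-order loss. The stated exponent $\frac{\varepsilon}{2k+\varepsilon}$, i.e.\ $p=\varepsilon/(2k+\varepsilon)$, corresponds to accounting for the touched weight relative to the reduced system: the precision/rescaling is measured against the norm of $H'$ rather than $M$, giving the condition $2kp\le\varepsilon(1-p)$. I expect this bookkeeping of precision versus threshold, rather than the structural argument above, to be the delicate part. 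I would first try to charge the phase-estimation error to the untouched weight $(1-p)$-fraction and solve $2kp=\varepsilon(1-p)$ for $p$.
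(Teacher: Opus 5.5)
Your proof is correct, and it takes a genuinely different route from the one the paper relies on. The paper does not reprove this theorem; it quotes it from Buhrman et al. For its own Theorem~\ref{th:main_intro} it follows a spectral template. First it lower-bounds the eigenvalue count $N(\cdot)$ above the target energy by exhibiting many orthogonal low-energy witnesses. Then it runs a Grover-type search on the full $n$-qubit $H$ from the maximally entangled state, whose overlap with the low-energy subspace is exactly $N(\cdot)/2^n$.

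You instead use the structure algorithmically: you average out a lightly coupled set $S$, pay $2w_S\le 2kpM$ once, and search exactly on $(1-p)n$ qubits. The two routes are closely linked. Your witness $\tfrac{I_S}{2^{|S|}}\otimes\operatorname{Tr}_S\psi$ has all eigenvalues at most $2^{-|S|}$, so a Markov argument on it already gives $N(\lambda_0+\varepsilon M)\ge(1-2kp/\varepsilon)\,2^{pn}$, which is the density-of-states statement. They nevertheless buy different things:
\begin{itemize}
\item Your reduction needs no counting or interlacing step and has no prefactor loss.
\item The spectral route is \emph{oblivious}: the algorithm never needs to know $S$ or write down a reduced Hamiltonian, because $N(\cdot)$ is unitarily invariant. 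This is exactly what the paper needs to replace $\lambda_0$ by $E_d$. Its quiet set lives in the frame of the unknown optimal circuit $U_d^*$, so your $H'$ could not be constructed there, whereas the eigenvalue count transfers from $H_d$ to $H$ for free.
\end{itemize}

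Your last paragraph identifies an obstacle that does not exist. Since $\frac{\varepsilon}{2k}>\frac{\varepsilon}{2k+\varepsilon}$, the naive choice $p=(\varepsilon-c\delta)/(2k)$ with $\delta=\varepsilon/n$ already suffices. It gives running time $O^*\bigl(2^{(1-p)n/2}/\delta\bigr)=O^*\bigl(2^{\frac n2(1-\frac{\varepsilon}{2k})}\bigr)$, which is at most the claimed bound. If you prefer $p=\frac{\varepsilon}{2k+\varepsilon}$, then $2kp=\varepsilon-\varepsilon p$ leaves slack $\varepsilon pM=\frac{\varepsilon^2}{2k+\varepsilon}M$. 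This slack is inverse-polynomial and absorbs the $O(\delta M)$ precision error for a suitable inverse-polynomial $\delta$. The explanation about measuring precision against the norm of $H'$ plays no role and should be dropped.

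There is also one small fix in the estimation step. Rounding an estimate of $\lambda_0(H')$ up by $\delta M$ guarantees $\widehat E\ge\lambda_0(H')\ge\lambda_0$, which is all the window requires. It does not guarantee $\widehat E\ge\operatorname{Tr}[H\rho]$ unless you round up by $2\delta M$.
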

	
	\subsection{Our Contributions}
	
	In this work, we demonstrate that by targeting a different energy threshold, we can achieve an exponential runtime scaling that is strictly faster than the bounds established by Buhrman et al. In addition, for Hamiltonians whose ground state is promised to lie within a restricted class of states, our algorithms achieve improved runtimes while matching the energy guarantees of Buhrman et al.
	
	Rather than targeting the absolute ground energy $\lambda_0$, we focus on the energy attainable by a restricted class of states that can be created by low-depth quantum circuits. Such low-depth states correspond to the ``trivial states'' considered, for instance, in the NLTS literature \cite{nlts,eldar,trivial}.
    
    A depth-$d$ state $\ket{\phi_d}$ is defined as a state obtained by applying a quantum circuit $U_d$ of depth $d$, composed of two-qubit gates, to the all-zero state $\ket{0}^{\otimes n}$.
	We define $E_d$ as the minimum energy over all depth-$d$ states:
	\begin{align*}
		E_d \coloneqq \min_{\ket{\phi_d} \in \mathcal{T}_d} \bra{\phi_d} H \ket{\phi_d},
	\end{align*}
	where $\mathcal{T}_d$ denotes the set of all depth-$d$ states. 
    This is equivalently expressed as
	\begin{align}\label{eq:ed_def}
		E_d = \min_{U_d \in \mathcal{C}_d} \bra{0}^{\otimes n} U_d^\dagger H U_d \ket{0}^{\otimes n},
	\end{align}
	where $\mathcal{C}_d$ is the set of all depth-$d$ quantum circuits composed of two-qubit gates. This threshold captures the limitations of the energy achievable by low-depth states. 
	
	Our main technical contributions are faster algorithms for (i) preparing a low-energy state relative to $E_d$ and (ii) estimating its energy:
    \begin{theorem}[Faster Low-Energy State Preparation and Estimation]\label{th:main_intro}
        \maintheoremtext
    \end{theorem}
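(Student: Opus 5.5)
Let $\ket{\phi^*}=U^*\ket{0}^{\otimes n}$ be a depth-$d$ state with energy $E_d$. The plan is to convert its energy guarantee into one for a \emph{computational basis state after a cheap unitary}, and then run Grover search over a Hamming ball of basis strings. The entropy term in the running time will come from the size of that ball.

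\textbf{Step 1 (light cones).} Since $U^*$ has depth $d$ and two-qubit gates, the conjugated term $U^{*\dagger}h_\alpha U^*$ acts on at most $k2^d$ qubits. Hence $H' \coloneqq U^{*\dagger}HU^*$ is $k2^d$-local with the same total strength $M$. It also satisfies $\bra{0^n}H'\ket{0^n}=E_d$.

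\textbf{Step 2 (a good string in a small ball).} I will adapt the argument behind Theorem~\ref{th:buhrman}. For any string $x$ and any product state, consider a random restriction: each qubit is independently kept with probability $1-p$, and otherwise flipped or rerandomized. A term of locality $k' = k2^d$ is untouched with probability at least $1-pk'$. Its expected energy therefore moves by at most $2pk'\|h_\alpha\|$, so the total shift is at most $2pk'M$.

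I will apply this to $H'$ around $0^n$. The aim is to show that the state $U^*\ket{x}$, for $x$ drawn from a suitable distribution over strings of Hamming weight roughly $pn$, has energy at most $E_d + \varepsilon M$ once $p=\varepsilon/(2^{d+2}k)$. This gives the factor $2^{d+2}k$ inside $\ent$.

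The algorithm does not know $U^*$. It will instead Grover-search jointly over candidate circuits and low-weight strings. The number of depth-$d$ circuits is $2^{O(nd\log n)}$ after discretizing the gates, which is not subexponential. I expect this to be the main obstacle.

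\textbf{Step 3 (avoiding the circuit search).} The intended fix is to reverse the roles. Restrict to the Hamming ball $B$ of radius $pn$ around $0^n$ in the \emph{rotated} frame. Instead of a Grover search, use amplitude amplification from the state $\ket{\phi^*}$ itself. That state's overlap with the low-energy subspace of $H$ below $E_d+\varepsilon M$ can be lower-bounded via a counting/projection argument, at roughly $2^{-\frac{n}{2}(1-\frac12\ent(p))}$ up to poly factors. Concretely, I would mirror Buhrman et al.'s proof. Their key lemma lower-bounds the weight of the low-energy spectral projector on a random basis state via a Markov/Chebyshev bound on $\bra{x}H\ket{x}$ under the product distribution. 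I would replace their $\ket{x}$ by the product-type structure in the $H'$ frame, where locality $k2^d$ enters.

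\textbf{Step 4 (energy estimate).} Finally, I will combine this with quantum phase estimation and minimum finding (Dürr--Høyer) over energy thresholds. This gives $\widehat{E}\ge\lambda_0$, trivially true for any eigenvalue-based estimate, and $\widehat{E}\le E_d+\varepsilon M$, with only polynomial overhead.

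The crux is Step 3. The overlap bound must use only a variational ansatz whose preparation cost stays $2^{o(n)}$. If that fails, I would try replacing the exact $U^*$ with an $\varepsilon$-net restricted to light-cone-local gates, accepting $d$ small enough (the theorem's "sufficiently small $d$") that the net size is absorbed.
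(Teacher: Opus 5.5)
There is a genuine gap, and it is where you placed the crux: every algorithmic step you propose needs $U^*$. Your Step 1 matches the paper's light-cone reduction. After that, Grover search over (circuit, string) pairs costs at least the square root of the number of candidate circuits. For $d=1$ with all-to-all gates, the choice of pairing alone gives $2^{\Theta(n\log n)}$ circuits. Even for a fixed architecture, an $\epsilon$-net of the $\Theta(n)$ two-qubit gates has size $2^{\Omega(n)}$, and its square root swamps the $2^{\frac n4\ent(\varepsilon/(2^{d+2}k))}$ you are trying to gain. So the fallback fails for every $d\ge 1$; the ``sufficiently small $d$'' in the paper only marks where the bound beats Theorem~\ref{th:buhrman}.

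Amplitude amplification ``from $\ket{\phi^*}$'' presupposes preparing $\ket{\phi^*}=U^*\ket{0}^{\otimes n}$, i.e.\ knowing $U^*$ again. The overlap you attach to $\ket{\phi^*}$ is also off. Since $\bra{\phi^*}H\ket{\phi^*}=E_d$ and the spectrum lies in $[-M,M]$, the averaging in Lemma~\ref{lem: Cauchy} gives $\bra{\phi^*}\Pi_{\le E_d+\varepsilon M}\ket{\phi^*}\ge\varepsilon/(2+\varepsilon)$. So if $\ket{\phi^*}$ were available, the task would be polynomial-time. The exponentially small quantity $2^{-\frac n2(1-\frac12\ent(\cdot))}$ is the overlap amplitude $\sqrt{\gamma}$ of a state that does \emph{not} depend on $U^*$.

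The missing idea is an initial state whose overlap with the low-energy subspace is unitarily invariant, so that $U^*$ enters only the analysis. The paper runs the Lin--Tong filtering algorithm on $\tilde H=H\otimes I_{\mathrm{anc}}$, starting from the maximally entangled state on $2n$ qubits. That state's overlap with $\Pi_{\le E}\otimes I_{\mathrm{anc}}$ is exactly $N(E)/2^n$. Moreover, $N(E)$ is the same for $H$ and $H_d=(U_d^*)^\dagger HU_d^*$, since they are isospectral. Your ``random basis state in the $H'$ frame'' is this object in disguise, because $U^*(I/2^n)U^{*\dagger}=I/2^n$.

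What Step 2 must then deliver is not a string to search for but a lower bound on $N(E_d+\mu M)$. The paper proceeds as follows. The states $X_R\ket{0}^{\otimes n}$ for distinct $R$ are orthonormal. It restricts $R$ to the quiet set (at least $n/2$ sites with $e(s)\le 2L/n$) and takes $|R|\le\lfloor\mu Mn/(8L)\rfloor$, so each state has $H_d$-energy at most $E_d+\mu M/2$. The interlacing trace bound $\sum_{i<|\mathcal{S}_r|}\lambda_i\le\sum_R\bra{\Phi_R}H_d\ket{\Phi_R}$ then yields $N(E_d+\mu M)\ge\frac{\mu/2}{\mu+2}|\mathcal{S}_r|=\Omega^*(2^{\frac n2\ent(\mu/(2^{d+2}k))})$. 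With $\gamma=N/2^n$, the $O^*(1/(y\sqrt{\gamma}))$ cost gives the stated time. The estimate comes from sampling the local terms over polynomially many preparations; no minimum finding is needed.

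Your averaged random-restriction bound would in fact suffice for the trace step, since only the average diagonal entry matters. This requires using $X$-flips rather than rerandomization, so that the family is orthonormal. It would even let you average over all $w$-subsets of $[n]$ instead of a quiet set, avoiding the factor $\frac12$ in the exponent. But it must be fed into this spectral count, not into a search.
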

	
	Our algorithms and the work of Buhrman et al.~present a fundamental trade-off between the depth of the energy approximation and the speed of the algorithm. The algorithms of Buhrman et al.~target a neighborhood bounding the absolute ground energy $\lambda_0$. Their execution time achieves a super-quadratic speedup characterized by a rational function $\left(\frac{\varepsilon}{2k+\varepsilon}\right)$ in the exponent. In contrast, our algorithms target the higher energy threshold $E_d$. Because we exploit the density of the states associated with this threshold, our asymptotic speedup is governed by the binary entropy function $\ent(\cdot)$.

    For small values of the ratio $\varepsilon/k$, the rational function $\left(\frac{\varepsilon}{2k+\varepsilon}\right)$ scales linearly. However, the binary entropy function introduces a logarithmic factor, scaling as $\Theta(\frac{\varepsilon}{k} \log \frac{k}{\varepsilon})$. Because $\ent(\cdot)$ grows strictly faster than the rational function due to this logarithmic advantage, our algorithms execute strictly faster in the regime where
    $$\frac{1}{2}\ent\left(\frac{\varepsilon}{2^{d+2}k}\right) \geq \frac{\varepsilon}{2k+\varepsilon},
    $$
    which implies that
    $$d = O\left(\log\log\left(\frac{k}{\varepsilon}\right)\right).$$
    See Table~\ref{tab:comparison} for a numerical comparison between the exponents.

    Importantly, if the ground state of $H$ is itself a low-depth state, then $E_d = \lambda_0$. For such Hamiltonians, our algorithms achieve the same energy guarantee as Buhrman et al., while attaining an improved runtime. 

    \paragraph*{Related work.}
    Recently, Buhrman et al.~\cite{buhr2} showed how to apply the ideas from \cite{buhr1} to derive new \emph{classical} algorithms for classical constraint satisfaction problems (e.g, MAX-$k$-SAT). For such classical optimization problems, the ground state is classical.
    Our approach can be seen as a generalization of this approach to quantum Hamiltonians. Note that, contrary to \cite{buhr2}, we consider \emph{quantum} algorithms and also focus on the preparation of low-energy states.
	
	\begin{table}[h]
		\centering
		\renewcommand{\arraystretch}{1.3}
		\begin{tabular}{cc ccc}
			\toprule
			& & \multicolumn{3}{c}{\textbf{Time Complexity Exponent Coefficient ($c$) for Est./ Prep.}} \\
			\cmidrule(lr){3-5} 
			$\boldsymbol{k}$ & $\boldsymbol{\varepsilon}$ & \textbf{Buhrman et al.~\cite{buhr1}} & \textbf{Our Work ($d=0$)} & \textbf{Our Work ($d=1$)} \\
			\midrule
			\multirow{4}{*}{$3$} 
			& $1/8  $ & $0.4897959$ & $0.4791143$ & $0.4882501$ \\
			& $0.05 $ & $0.4958678$ & $0.4902640$ & $0.4946104$ \\
			& $0.01 $ & $0.4991681$ & $0.4975686$ & $0.4986801$ \\
			& $0.001$ & $0.4999167$ & $0.4996876$ & $0.4998334$ \\
			\midrule
			\multirow{4}{*}{$4$} 
			& $1/8  $ & $0.4923077$ & $0.4835214$ & $0.4907814$ \\
			& $0.05 $ & $0.4968944$ & $0.4923732$ & $0.4957955$ \\
			& $0.01 $ & $0.4993758$ & $0.4981115$ & $0.4989776$ \\
			& $0.001$ & $0.4999375$ & $0.4997592$ & $0.4998718$ \\
			\midrule
			\multirow{4}{*}{$10$} 
			& $1/8  $ & $0.4968944$ & $0.4923732$ & $0.4957955$ \\
			& $0.05 $ & $0.4987531$ & $0.4965357$ & $0.4981115$ \\
			& $0.01 $ & $0.4997501$ & $0.4991620$ & $0.4995497$ \\
			& $0.001$ & $0.4999750$ & $0.4998954$ & $0.4999446$ \\
			\bottomrule
		\end{tabular}
		\vspace{0.2cm}
		\caption{Comparison of the time complexity exponent $c$, where algorithms run in time $O^*(2^{cn})$. A smaller $c$ indicates a faster execution. Buhrman et al.~\cite{buhr1} target a relative error above the absolute ground energy $\lambda_0$, whereas our work executes faster by targeting $E_d$.}
		\label{tab:comparison}
	\end{table}

	\subsection{Preliminaries and Notation}\label{sec: prelim}
	
	As already mentioned, we consider an $n$-qubit $k$-local Hamiltonian $H=\sum_{\alpha=1}^{m}h_\alpha$ with eigenvalues $\lambda_0 \le \lambda_1 \le \dots \le \lambda_{2^n-1}$ and total interaction strength $M = \sum_{\alpha=1}^m \Vert h_\alpha \Vert.$
	
	To analyze the density of low-energy states, we define the \emph{cumulative spectral count}, denoted $N(E)$, as the number of eigenstates with energy at most $E$:
	\begin{align*}
		N(E) \coloneqq \big|\{ i \in \{0, \dots, 2^n - 1\} \mid \lambda_i \le E \}\big|.
	\end{align*}
	
	Recall our target energy threshold $E_d$, defined in Equation~(\ref{eq:ed_def}) as the minimum energy achievable by applying any depth-$d$ circuit to the all-zero state $\ket{0}^{\otimes n}$. Let $U_d^* \in \mathcal{C}_d$ be an optimal depth-$d$ circuit that achieves this minimum. Therefore, $E_d$ can now be expressed as
	\begin{align*}
		E_d 
		=
		\bra{0}^{\otimes n}(U_d^*)^\dagger HU_d^* \ket{0}^{\otimes n}. 
	\end{align*}
	
	We can absorb the optimal circuit $U_d^*$ into the Hamiltonian. We define the \emph{effective Hamiltonian} $H_d$ as:
	\begin{align*}
		H_d \coloneqq (U_d^*)^{\dagger} H U_d^*.
	\end{align*}
	Consequently, we can express $E_d$ as an expectation over this effective Hamiltonian:
	\begin{align*}\label{eq: Phi}
		E_d = \bra{0}^{\otimes n}H_d \ket{0}^{\otimes n}. 
	\end{align*}
	
	Without loss of generality (see~\cite[Section 2.3]{anshu} for a justification based on the strict light-cone argument), the conjugation of a local term by a low-depth circuit strictly bounds the spread of correlations. Specifically, any Hamiltonian term $(U_d^*)^{\dagger} h_\alpha U_d^*$ expands its support by at most a factor of $2^d$.\footnote{The support $k$ of a local interaction term at most doubles at each layer of the two-qubit gate circuit.} Therefore, the effective Hamiltonian $H_d$ is a $2^d k$-local Hamiltonian, for some value $2^d k$ which we will define as $K$.
	
	Because $H_d$ and $H$ are unitarily equivalent by definition, both operators possess identical eigenvalue spectra (they are isospectral). Therefore, establishing a lower bound on the cumulative spectral count $N(E)$ for the effective Hamiltonian $H_d$ directly and equivalently bounds the cumulative spectral count for the original Hamiltonian $H$.

    Since $H$ is Hermitian, and therefore \emph{normal}, it admits a spectral decomposition $H=\sum_i \lambda_i \ket{\phi_i}\bra{\phi_i}$. Let $E \in \mathbb{R}$ be an energy threshold. We define the low-energy projector as $\Pi_{\le E} = \sum_{i : \lambda_i \le E} \ket{\phi_i}\bra{\phi_i}$. The overlap $\gamma$ of a quantum state $\ket{\psi}$ with the subspace of energies below $E$ is given by the expectation value of this projector:
    $$ \gamma 
    = 
    \bra{\psi} \Pi_{\le E} \ket{\psi} 
    = 
    \sum_{i : \lambda_i \le E} \braket{\psi\vert\phi_i}\braket{\phi_i\vert\psi}
    =
    \sum_{i : \lambda_i \le E} \vert\braket{\psi\vert\phi_i}\vert^2.
    $$
	
	As already mentioned, for asymptotic scaling, we use the notation $f(n) = O^\ast(g(n))$ to indicate that $f(n) \le g(n) \cdot \mathrm{poly}(n)$, and $f(n) = \Omega^\ast(g(n))$ to indicate that $f(n) \ge g(n)\cdot\mathrm{poly}^{-1}(n)$, thereby suppressing polynomial factors in $n$ for both the lower and upper bounds, respectively.

	\subsection{Organization of the Paper}
	The remainder of this paper is organized as follows. In Section~\ref{bigsec2}, we establish the theoretical foundation of our work by analyzing the density of low-energy states. We first prove that the effective $K$-local Hamiltonian $H_d$ possesses an exponentially dense spectrum below the energy threshold $E_d + \mu M$, for some $\mu>0$, by showing the existence of a large family of orthonormal, low-energy product states. Because $H_d$ and $H$ are isospectral, we then extend this result via a corollary to prove that the original Hamiltonian $H$ shares this identical dense low-energy spectrum, which is the main result in Section~\ref{bigsec2}. 
	
	Building upon this spectral property, Section~\ref{bigsec3} presents our quantum algorithms for faster low-energy estimation and state preparation. By utilizing the maximally entangled state to guarantee an exponential overlap with the dense low-energy subspace of $H$, we derive the accelerated time complexity bounds. 
    
	Finally, Section~\ref{bigsec4} provides a brief conclusion and summarizes the broader implications of this work.
	
	\section{The Dense Low-Energy Spectrum}\label{bigsec2}
	In this section, we establish the main theoretical foundation of our work, which is the exponential lower bound on the density of low-energy eigenstates for general $k$-local Hamiltonians. Specifically, we prove that the cumulative spectral count $N(E_d + \mu M)$ is exponentially large. This result serves as the theoretical engine that will allow our state preparation algorithm (presented in Section~\ref{bigsec3}) to bypass the standard Grover search barrier. To establish this bound, we exploit the geometric structure of the product-state manifold.

	\subsection{Proof Strategy}
	To derive our lower bound on $N(E_d + \mu M)$, we employ the following strategy:

	\begin{enumerate}
		\item \textbf{Existence of Many Orthonormal States (Section~\ref{sec1}):} We identify a set of qubit sites with low energy contribution and show that perturbing these sites in $\ket{0}^{\otimes n}$ yields a large family of orthonormal product states.
		\item \textbf{Energy Bounding (Section~\ref{sec2}):} We determine the maximum number $r$ of qubit sites that can be perturbed while increasing the energy relative to $E_d$ by at most $\mu \eta M$, for some scaling factor $\eta \in [0,1]$.
		\item \textbf{Bounding the Low-Energy Spectrum Count (Section~\ref{sec3}):} Finally, having established the existence of a large set of orthonormal states with bounded energy, we invoke the Cauchy Interlacing Theorem to relate these states to the eigenstates of $H_d$, thereby establishing a lower bound on $N(E_d+\mu M).$
	\end{enumerate}
	
	\subsection{Existence of Many Mutually Orthogonal States}\label{sec1}
	First, we identify a set of qubit sites with low energy contribution, which we call the Quiet Set $Q_\delta$. The Hamiltonian can be expanded as $H_d=\sum_{\alpha=1}^{m}h'_\alpha$, where the local interaction terms $h'_\alpha$ each act on at most $K$ qubits. By the definition of the total interaction strength,
	\begin{align*}
		M = \sum_{\alpha=1}^m\Vert h_\alpha\Vert = \sum_{\alpha=1}^m\Vert (U_d^*)^{\dagger} h_\alpha U_d^*\Vert=\sum_{\alpha=1}^m\Vert h'_\alpha\Vert.
	\end{align*}
	Therefore, both $H$ and $H_d$ have the same total interaction strength. To quantify the effective Hamiltonian's interaction strength on each qubit site $s \in \{1,\dots,n\}$, we define the local energy contribution $e(s)$ as
	\begin{align*}
		e(s) \coloneqq \sum_{\alpha\in\mathcal{I}(s)}\Vert h'_\alpha\Vert,
	\end{align*}
	where $\mathcal{I}(s)$ is the set of interactions acting on $s$. 
	We define the sum of local energy contributions as
	\begin{equation}\label{eq:defL}
		L \coloneqq \sum_{s=1}^{n}e(s).
	\end{equation}
	
	\begin{definition}[The Quiet Set $Q_\delta$]
		For any constant $\delta \ge 1 + \frac{\mu M} {L} $, we define the quiet set as the subset of qubit sites that have a bounded local energy contribution relative to the average:
		$$Q_\delta=\left\{s\in\{1,\dots,n\} \;\middle|\; e(s)\leq \frac{\delta L}{n}\right\}.$$
	\end{definition}
	
	\begin{lemma}\label{lem: Lower}
		For any $\delta \ge 1 + \frac{\mu M}{L}$, the size of the quiet set satisfies $$\vert Q_\delta\vert \geq \frac{(\delta-1)n}{\delta}.$$
	\end{lemma}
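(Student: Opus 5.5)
This is a Markov-inequality counting argument applied to the nonnegative quantities $e(s)$. The hypothesis $\delta \ge 1 + \mu M/L$ is used only to ensure $\delta > 1$ (assuming $\mu M>0$), or $\delta\ge 1$ in general. It therefore plays no real role beyond making the bound meaningful.

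\textbf{Step 1: count the loud sites.} Let $\overline{Q_\delta} = \{1,\dots,n\}\setminus Q_\delta$ be the complement, the sites with $e(s) > \delta L/n$. Since every $e(s) \ge 0$ (it is a sum of spectral norms), dropping the sites in $Q_\delta$ from the sum~(\ref{eq:defL}) gives
$$ L = \sum_{s=1}^n e(s) \;\ge\; \sum_{s\in \overline{Q_\delta}} e(s) \;\ge\; |\overline{Q_\delta}|\cdot \frac{\delta L}{n}. $$

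\textbf{Step 2: case on $L$.} If $L>0$, dividing by $\delta L/n$ gives $|\overline{Q_\delta}| \le n/\delta$. Hence
$$|Q_\delta| \ge n - \frac{n}{\delta} = \frac{(\delta-1)n}{\delta}.$$
If $L=0$, then every $e(s)=0 \le \delta L/n$, so $Q_\delta=\{1,\dots,n\}$ and the bound holds trivially because $(\delta-1)/\delta\le 1$. This degenerate case is also where the ratio $\mu M/L$ in the hypothesis would be ill-defined, so it can simply be excluded or noted separately.

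\textbf{Main obstacle.} The only delicate point is the strict versus non-strict inequality in the definition of $\overline{Q_\delta}$. The strict inequality $e(s) > \delta L/n$ only strengthens Step~1, and the non-strict version suffices, so this causes no difficulty.
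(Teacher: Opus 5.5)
Your proof is correct and is the same Markov-type counting argument the paper uses: bound $(n-|Q_\delta|)\,\delta L/n \le \sum_{s\notin Q_\delta} e(s)\le L$ and solve for $|Q_\delta|$. Your non-strict inequality and separate treatment of $L=0$ are slightly more careful than the paper's strict version, which would read $0<0$ when every site is quiet.
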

	\begin{proof}
		The sum of the local energies for qubits strictly outside the quiet set must satisfy
		$$(n - |Q_\delta|) \frac{\delta L}{n} < \sum_{s \notin Q_\delta} e(s) \le L.$$
		Solving this inequality for $|Q_\delta|$ directly yields the result. $\qedhere$
	\end{proof}
	
	Next, we prove the existence of a family of orthonormal product states derived from $\ket{0}^{\otimes n}$ by perturbing the quiet set $Q_\delta$. See Figure~\ref{fig:rotations} for an illustration. We also provide a lower bound on the cardinality of this family. The states are defined by modifying subsets of qubits within $Q_\delta$, where the size of these subsets is bounded by $r$ (to be determined in Section~\ref{sec2}).

	\begin{center}
		\begin{figure}[h]
			\centering
				\begin{tikzpicture}[
			scale=0.65, 
			transform shape,
			>={Stealth[length=1.5mm, width=1.2mm]}
			]
			
			\newcommand{\blochsphere}{
				\draw[thick] (0,0) circle (1.5);
				\draw[thick] (1.5,0) arc (0:-180:1.5 and 0.45);
				\draw[thick, dashed] (1.5,0) arc (0:180:1.5 and 0.45);
				
				\draw[->] (0,0) -- (0,1.8) node[above] {$Z$};
				\draw[->] (0,0) -- (-1.1,-0.8) node[below left] {$X$};
				\draw[->] (0,0) -- (1.8,0) node[right] {$Y$};
			}
			
			
			\node[left, align=right, font=\bfseries\Large] at (-3, 0) 
			{ all-zero state \\ $\ket{0}^{\otimes n}$};
			
			\begin{scope}[shift={(0,0)}]
				\blochsphere
				\draw[->, very thick, red!70!black] 
				(0,0) -- (0,1.5) node[above right=2pt] {$\ket{0}$};
				\fill (0,0) circle (2pt);
			\end{scope}
			
			\node at (2.9,0) {\Large $\otimes$};
			
			\begin{scope}[shift={(5.8,0)}]
				\blochsphere
				\draw[->, very thick, red!70!black] 
				(0,0) -- (0,1.5) node[above right=2pt] {$\ket{0}$};
				\fill (0,0) circle (2pt);
			\end{scope}
			
			\node at (9.7,0) {\Large $\otimes \,\,\,\cdots\cdots \,\,\,\otimes$};
			
			\begin{scope}[shift={(13.1,0)}]
				\blochsphere
				\draw[->, very thick, red!70!black] 
				(0,0) -- (0,1.5) node[above right=2pt] {$\ket{0}$};
				\fill (0,0) circle (2pt);
			\end{scope}

			
			\node[left, align=right, font=\bfseries\Large] at (-3.0, -4.5) 
			{After $X_1$ \\ $X_1 \ket{0}^{\otimes n}$};
			
			\begin{scope}[shift={(0,-4.5)}]
				\blochsphere
				
				\draw[->, thick, red!70!black, dashed, opacity=0.4] 
				(0,0) -- (0,1.5)
				node[above right=1mm, text=black, opacity=0.4] {$\ket{0}$};
				
				\draw[->, very thick, blue!80!black] 
				(0,0) -- (0,-1.5) node[below right=2pt] {$\ket{1}$};

				\fill (0,0) circle (2pt);
				
				\draw[->, thick, blue!80!black] (95:1.7) arc (95:270:1.7) node[midway, above left=1mm] {};
			\end{scope}
			
			\node at (2.9,-4.5) {\Large $\otimes$};
			
			\begin{scope}[shift={(5.8,-4.5)}]
				\blochsphere
				\draw[->, very thick, red!70!black] 
				(0,0) -- (0,1.5) node[above right=2pt] {$\ket{0}$};
				\fill (0,0) circle (2pt);
			\end{scope}

            	\node at (9.7,-4.5) {\Large $\otimes \,\,\,\cdots\cdots \,\,\,\otimes$};
			
			\begin{scope}[shift={(13.1,-4.5)}]
				\blochsphere
				\draw[->, very thick, red!70!black] 
				(0,0) -- (0,1.5) node[above right=2pt] {$\ket{0}$};
				\fill (0,0) circle (2pt);
			\end{scope}
			
		\end{tikzpicture}
		\caption{Visual representation of the local unitary perturbation applied to the first qubit, assuming it is in the quiet set.}
        \label{fig:rotations}
		\end{figure}
	\end{center}
	
	\begin{lemma}\label{lem: Orthgonality}
		For any subset $R \subseteq Q_\delta$, define the unitary $X_R \coloneqq \bigotimes_{s \in R} X_s,$ acting trivially on qubits outside $R$, where $X_s$ is the Pauli $X$ operator acting on the qubit site $s$. We define the perturbed state corresponding to the set $R$ as
		\begin{align*}
			\ket{\Phi_R} \coloneqq X_R \ket{0}^{\otimes n}.
		\end{align*}
		Then, for any two distinct subsets $R, R' \subseteq Q_\delta$, the corresponding states are orthogonal:
		$$\braket{\Phi_R | \Phi_{R'}} = 0.$$
	\end{lemma}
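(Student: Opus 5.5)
The plan is to compute $\ket{\Phi_R}$ explicitly as a computational basis state and then read off the inner product. Since $X\ket{0}=\ket{1}$ and $X_R$ acts as a tensor product of single-qubit operators, we have
$$\ket{\Phi_R} = \bigotimes_{s=1}^{n} \ket{\mathbb{1}[s\in R]} = \ket{x^{(R)}},$$
where $x^{(R)}\in\{0,1\}^n$ is the indicator string of $R$. That is, $x^{(R)}_s=1$ if $s\in R$ and $0$ otherwise. Every qubit outside $R$, including all qubits outside $Q_\delta$, stays in $\ket{0}$.

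Next, for distinct $R,R'\subseteq Q_\delta$, pick a site $s\in R\,\triangle\,R'$, which exists because $R\neq R'$. Then $x^{(R)}_s\neq x^{(R')}_s$. The inner product of product states factorizes as
$$\braket{\Phi_R|\Phi_{R'}}=\prod_{t=1}^n \braket{x^{(R)}_t|x^{(R')}_t},$$
and the factor at $t=s$ is either $\braket{0|1}$ or $\braket{1|0}$, both equal to $0$. Hence the whole product vanishes.

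An equivalent formulation is $\braket{\Phi_R|\Phi_{R'}}=\bra{0}^{\otimes n}X_R X_{R'}\ket{0}^{\otimes n}=\bra{0}^{\otimes n}X_{R\triangle R'}\ket{0}^{\otimes n}$, using $X_s^2=I$ and the fact that Paulis on different sites commute. Since $X_{R\triangle R'}$ flips at least one qubit, this is zero.

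There is no real obstacle here. The only point needing care is the bookkeeping that the map $R\mapsto x^{(R)}$ is injective, so distinct subsets give distinct computational basis strings. This also shows the family $\{\ket{\Phi_R}\}$ is orthonormal, each state having unit norm because $X_R$ is unitary. That fact will be used later when counting states with $|R|\le r$.
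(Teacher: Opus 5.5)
Your proof is correct and takes essentially the same approach as the paper: choose a site in $R \triangle R'$, note that the two product states carry $\ket{0}$ and $\ket{1}$ there, and use that the inner product of product states factorizes. Your additional observations also hold: $\ket{\Phi_R}$ is the computational basis state indexed by the indicator string of $R$, and $X_R X_{R'} = X_{R\triangle R'}$. They are a slightly more explicit packaging of the same argument.
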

	\begin{proof}
		Consider two distinct subsets $R \neq R'$. There exists at least one qubit index $s$ in the symmetric difference $R \triangle R'$. Without loss of generality, assume $s \in R \setminus R'$.
		
		At site $s$, the state $\ket{\Phi_R}$ contains the qubit $X_s \ket{0} = \ket{1}$, whereas the state $\ket{\Phi_{R'}}$ contains the unperturbed qubit $\ket{0}.$
		
		Because the global inner product of product states factorizes into the product of local inner products, we have
		$$\braket{\Phi_R | \Phi_{R'}} = \braket{1 | 0} \cdot \prod_{t \neq s} \braket{\phi_t^{(R)} | \phi_t^{(R')}} = 0 \cdot \prod_{t \neq s} \braket{\phi_t^{(R)} | \phi_t^{(R')}} = 0,$$
		where $\ket{\phi_t^{(R)}}$ denotes the single-qubit state at site $t$ in $\ket{\Phi_R}.$ Thus, the states are orthogonal. \(\qedhere\)
	\end{proof}
	
	Using the above lemma, we can now lower-bound the number of such orthogonal states available to us.
	\begin{corollary}\label{cor: Sr Lower}
		Let $\mathcal{S}_r \coloneqq \{\ket{\Phi_R} : R \subseteq Q_\delta, |R|\le r\}$ be the family of orthonormal states defined in Lemma~\ref{lem: Orthgonality}. Then, for any $r\leq \vert Q_\delta\vert/2$, the size of this family satisfies
		
		$$\vert\mathcal{S}_r\vert = \Omega^*\left(\max_{\delta \ge 1 + \frac{\mu M}{L}}\left\{2^{\ent\left(\frac{r\delta}{(\delta-1)n}\right)\frac{(\delta-1)n}{\delta}}\right\} \right).$$
	\end{corollary}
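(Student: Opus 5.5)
By Lemma~\ref{lem: Orthgonality}, the map $R \mapsto \ket{\Phi_R}$ is injective on subsets of $Q_\delta$, since distinct subsets give orthogonal, hence distinct, states. Therefore
\[
|\mathcal{S}_r| = \sum_{j=0}^{r}\binom{|Q_\delta|}{j} \ge \binom{|Q_\delta|}{r}.
\]
The plan is to lower-bound this single binomial coefficient with the standard entropy estimate, and then replace $|Q_\delta|$ by the lower bound $q \coloneqq \frac{(\delta-1)n}{\delta}$ from Lemma~\ref{lem: Lower}. The maximum over $\delta$ then comes for free: $\mathcal{S}_r$ does not depend on any auxiliary choice, so the bound holds for every admissible $\delta$, and hence for the best one.

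\textbf{Step 1 (entropy bound).} For integers $0 \le r \le N$ I would use
\[
\binom{N}{r} \ge \frac{1}{N+1}\, 2^{N\ent(r/N)}.
\]
This follows from the type-class argument: the binomial terms $\binom{N}{j}p^j(1-p)^{N-j}$ with $p=r/N$ are maximized at $j=r$ and sum to $1$. Applied with $N=|Q_\delta| \le n$, the $1/(N+1)$ factor is $1/\mathrm{poly}(n)$ and is absorbed into $\Omega^*$. This gives $|\mathcal{S}_r| = \Omega^*\bigl(2^{|Q_\delta|\,\ent(r/|Q_\delta|)}\bigr)$.

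\textbf{Step 2 (monotonicity in $N$).} This is the step that needs care. Consider $f(N) \coloneqq N\ent(r/N)$ for real $N \ge 2r$. Writing $g(x) = x\,\ent(r/x)$, we have
\[
g'(x) = \log_2\frac{x}{x-r} > 0 .
\]
So $f$ is increasing, although we only need this on the range $N \ge 2r$. The hypothesis $r \le |Q_\delta|/2$ guarantees $|Q_\delta| \ge 2r$.

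There is a wrinkle: we want to pass down to $q$, and we also need $q \ge r$ so that $\ent(r/q)$ is defined. Two cases handle this.
\begin{itemize}
\item If $q \ge 2r$, monotonicity of $g$ on $[2r,\infty)$ together with $|Q_\delta| \ge q$ gives $f(|Q_\delta|) \ge g(q)$.
\item If $r \le q < 2r$, then $g$ is still increasing on $[r,\infty)$, since the derivative formula holds for all $x > r$. So again $g(|Q_\delta|) \ge g(q)$.
\end{itemize}
If $q < r$, the expression in the statement is undefined or degenerate. I would interpret $\ent(x)$ for $x>1$ as excluded, since $\delta$ is chosen so that the argument lies in $[0,1]$. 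A cleaner route is to note that $g$ is increasing on all of $(r,\infty)$ and to require $r \le q$ implicitly.

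\textbf{Step 3 (conclusion).} Combining the two steps,
\[
|\mathcal{S}_r| = \Omega^*\Bigl(2^{\ent\left(\frac{r\delta}{(\delta-1)n}\right)\frac{(\delta-1)n}{\delta}}\Bigr)
\]
for each $\delta \ge 1+\mu M/L$. Taking the supremum over $\delta$ gives the stated bound. The polynomial slack is uniform in $\delta$ (it is at most $n+1$), so the maximum does not affect the $\Omega^*$.

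I expect the main obstacle to be Step 2: checking that replacing $|Q_\delta|$ by its lower bound does not break the entropy expression. This requires the derivative computation for $x\,\ent(r/x)$ and attention to the domain where $r/q \le 1$.
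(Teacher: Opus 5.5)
Your proof is correct and follows the paper's route: count the subsets of $Q_\delta$, keep the single term $\binom{|Q_\delta|}{r}\ge 2^{|Q_\delta|\ent(r/|Q_\delta|)}/(|Q_\delta|+1)$, use monotonicity of $\alpha\,\ent(r/\alpha)$ to replace $|Q_\delta|$ by $\frac{(\delta-1)n}{\delta}$, and optimize over $\delta$. Your derivative $\log_2\frac{x}{x-r}>0$ is a small improvement, since it gives monotonicity on all of $(r,\infty)$ and so covers the case $r\le\frac{(\delta-1)n}{\delta}<2r$, which the paper's stated range $\alpha\ge 2r$ does not strictly license (in the later application $r\le\frac{(\delta-1)n}{2\delta}$ holds anyway because $\delta\ge 1+\frac{\mu M}{L}$).

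One correction: $\mathcal{S}_r$ is built from $Q_\delta$, so it \emph{does} depend on $\delta$, and the maximum is not ``free'' for a fixed family. Read literally with $\delta$ fixed on the left, the max-version can fail. What you and the paper actually establish is the bound for each admissible $\delta$, with $\mathcal{S}_r$ formed from that $Q_\delta$; this is exactly the form used in Theorem~\ref{th: Bound}.
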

	
	\begin{proof}By Lemma~\ref{lem: Orthgonality}, any two distinct states $\ket{\Phi_R}$ and $\ket{\Phi_{R'}}$ are orthogonal whenever $R \neq R'$. Therefore, the size of the set $\mathcal{S}_r$ is exactly equal to the number of distinct subsets $R \subseteq Q_\delta$ with cardinality $\vert R\vert \le r.$ The number of such subsets is given by the sum of binomial coefficients:
		$$\vert\mathcal{S}_r\vert = \sum_{i=0}^{r} \binom{\vert Q_\delta\vert}{i}.$$
		
		Using the standard lower bound for the sum of binomial coefficients up to $r \le n/2$, specifically $\sum_{i=0}^r \binom{n}{i} \ge 2^{n \ent(r/n)} / (n+1)$, we have
		\begin{align*}
			\vert\mathcal{S}_r\vert \geq \binom{\vert Q_\delta\vert}{r} \geq \frac{2^{\ent\left(\frac{r}{\vert Q_\delta\vert}\right)\vert Q_\delta\vert}}{\vert Q_\delta\vert+1} = \Omega^*\left(2^{\ent\left(\frac{r}{\vert Q_\delta\vert}\right)\vert Q_\delta\vert}\right).
		\end{align*}
		
		We now utilize the monotonicity of the entropy bound. The function $f(\alpha) = \alpha \ent(r/\alpha)$ is monotonically increasing with respect to $\alpha$ for $r \le \alpha/2$. Recall from Lemma~\ref{lem: Lower} that the size of the quiet set is lower-bounded by $\vert Q_\delta\vert \geq \frac{(\delta-1)n}{\delta}$.
		
		Substituting this lower bound for $\vert Q_\delta\vert$ into the exponent implies
		$$\vert\mathcal{S}_r\vert = \Omega^*\left(2^{\ent\left(\frac{r\delta}{(\delta-1)n}\right)\frac{(\delta-1)n}{\delta}}\right).$$
		Optimizing over all valid $\delta \ge 1 + \frac{\mu M}{L}$ yields the statement of the Corollary. \(\qedhere\)
	\end{proof}
	
	\subsection{Energy Analysis of Perturbed States}\label{sec2}
	Recall from Section~\ref{sec1} that the value of $r$ was left to be determined. In this subsection, we establish the maximum size $r$ for the subsets $R$ such that the resulting orthonormal product states remain within a bounded energy window near $E_d$. Specifically, we determine the value of $r$ required to ensure that any state $\ket{\Phi_R}$ in the family $\mathcal{S}_r$ satisfies:
	\begin{align*}
		\vert\bra{\Phi_R} H_d \ket{\Phi_R} - E_d\vert \le \mu\eta M.
	\end{align*}
	where $\mu>0$ and $\eta\in[0,1]$.
	\begin{lemma}\label{lem: Energy}
		Let $\mathcal{S}_r=\{\ket{\Phi_R} : |R|\le r\}$ be the family of orthonormal states identified in Lemma~\ref{lem: Orthgonality}.
		
		If the size of the subset is bounded by $|R|\le r$, where
		$$r=\left\lfloor \frac{\mu \eta M n}{2\delta L} \right\rfloor,$$
		then the energy of the perturbed state satisfies
		$$\vert\bra{\Phi_R} H_d \ket{\Phi_R} - E_d\vert \le \mu\eta M.$$
	\end{lemma}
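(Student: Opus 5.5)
We decompose the energy difference term by term. Write $H_d=\sum_{\alpha}h'_\alpha$ and note that $E_d=\bra{0}^{\otimes n}H_d\ket{0}^{\otimes n}=\sum_\alpha \bra{0}^{\otimes n}h'_\alpha\ket{0}^{\otimes n}$. Since $\ket{\Phi_R}=X_R\ket{0}^{\otimes n}$, we have
\begin{align*}
\bra{\Phi_R}H_d\ket{\Phi_R}-E_d=\sum_{\alpha=1}^m\left(\bra{0}^{\otimes n}X_R h'_\alpha X_R\ket{0}^{\otimes n}-\bra{0}^{\otimes n}h'_\alpha\ket{0}^{\otimes n}\right).
\end{align*}
The key observation is that if the support of $h'_\alpha$ is disjoint from $R$, then $X_R$ commutes with $h'_\alpha$. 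Because $X_R^2=I$, the corresponding summand vanishes. Hence only the interactions in $\bigcup_{s\in R}\mathcal{I}(s)$ contribute.

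For each contributing term we use the crude bound $|\bra{\psi}h'_\alpha\ket{\psi}-\bra{\psi'}h'_\alpha\ket{\psi'}|\le 2\Vert h'_\alpha\Vert$. The triangle inequality and a union bound over the sites of $R$ then give
\begin{align*}
\left|\bra{\Phi_R}H_d\ket{\Phi_R}-E_d\right|\le 2\sum_{\alpha\in\bigcup_{s\in R}\mathcal{I}(s)}\Vert h'_\alpha\Vert\le 2\sum_{s\in R}e(s).
\end{align*}
The second inequality may overcount interactions touching several sites of $R$, but that only weakens the bound, which is harmless. Since $R\subseteq Q_\delta$, every $s\in R$ satisfies $e(s)\le \delta L/n$. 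Therefore the deviation is at most $2|R|\delta L/n\le 2r\delta L/n$.

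Finally, substituting $r=\lfloor \mu\eta M n/(2\delta L)\rfloor\le \mu\eta Mn/(2\delta L)$ gives $2r\delta L/n\le \mu\eta M$, which is the claim. The only step requiring care is the cancellation argument for terms not touching $R$. It uses that $X_R$ is a tensor product acting trivially outside $R$, so it commutes with any operator supported on the complement of $R$. Everything else is bookkeeping, so I expect no real obstacle. The factor $2$ in the choice of $r$ is exactly what absorbs the $2\Vert h'_\alpha\Vert$ bound on each affected term.
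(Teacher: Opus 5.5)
Your proof is correct and follows essentially the same route as the paper. Terms whose support misses $R$ cancel because $X_R$ commutes with them; each remaining term changes by at most $2\Vert h'_\alpha\Vert$; the sum of these is bounded by $\sum_{s\in R} e(s)\le |R|\delta L/n$ since $R\subseteq Q_\delta$; and the choice $r=\lfloor \mu\eta Mn/(2\delta L)\rfloor$ then gives the bound $\mu\eta M$.
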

	
	\begin{proof}
		We evaluate the energy difference $\Delta \coloneqq \vert\bra{\Phi_R}H_d\ket{\Phi_R} - E_d\vert$. 
        
        Recall that $\ket{\Phi_R} = X_R \ket{0}^{\otimes n}$. Substituting this into the energy expression:
		\begin{align*}
			\Delta &= \vert\bra{0}^{\otimes n} X_R^\dagger H_d X_R \ket{0}^{\otimes n} - \bra{0}^{\otimes n} H_d \ket{0}^{\otimes n}\vert.
		\end{align*}
		Let $T$ denote the set of indices for interaction terms $h'_\alpha$ that act on at least one qubit in $R$. We decompose the Hamiltonian into terms involved with $R$ and those that are not:
		$$H_d = \sum_{\alpha\in T} h'_\alpha + \sum_{\alpha\notin T} h'_\alpha.$$
		
		Since $X_R$ acts nontrivially only on qubits in $R$, it commutes with all terms $h'_\alpha$ where $\alpha \notin T$. Consequently, the energy contribution from terms outside $T$ is invariant:
		$$ 
		\bra{0}^{\otimes n} X_R^\dagger \left(\sum_{\alpha\notin T} h'_\alpha\right) X_R \ket{0}^{\otimes n} 
		= 
		\bra{0}^{\otimes n} X_R^\dagger X_R \left(\sum_{\alpha\notin T} h'_\alpha\right) \ket{0}^{\otimes n}
		=
		\bra{0}^{\otimes n} \sum_{\alpha\notin T} h'_\alpha \ket{0}^{\otimes n}. 
		$$
		The energy difference is therefore determined solely by the local terms in $T$:
		\begin{align*}
			\Delta &= \Bigg\vert\bra{0}^{\otimes n} X_R^\dagger \left(\sum_{\alpha\in T} h'_\alpha\right) X_R \ket{0}^{\otimes n} - \bra{0}^{\otimes n} \sum_{\alpha\in T} h'_\alpha \ket{0}^{\otimes n}\Bigg\vert \\
			&= \Bigg\vert\bra{0}^{\otimes n} \left( \sum_{\alpha\in T} X_R^\dagger h'_\alpha X_R - h'_\alpha \right) \ket{0}^{\otimes n}\Bigg\vert.
		\end{align*}
		Using the fact that the expectation value of an operator is bounded by its spectral norm (also noting that $\Vert X_R^\dagger h'_\alpha X_R \Vert = \Vert h'_\alpha \Vert$), we bound $\Delta$:
		$$\Delta \leq \sum_{\alpha\in T} \left( \Vert X_R^\dagger h'_\alpha X_R \Vert + \Vert h'_\alpha \Vert \right) = 2 \sum_{\alpha\in T} \Vert h'_\alpha \Vert.$$
		
		To relate this sum to the quiet set properties, observe that summing the local energy $e(s)$ over all $s \in R$ counts every term in $T$ at least once (since every $\alpha \in T$ involves at least one qubit in $R$). Thus
		$$\sum_{\alpha\in T} \Vert h'_\alpha \Vert \le \sum_{s\in R} e(s).$$
		Since $R \subseteq Q_\delta$, every qubit $s \in R$ satisfies $e(s) \le \frac{\delta L}{n}$. Therefore
		$$\Delta \le 2 \sum_{s\in R} \frac{\delta L}{n} = \frac{2 |R| \delta L}{n}.$$
		By choosing $|R| \le r = \lfloor \frac{\mu \eta M n}{2\delta L} \rfloor$, we ensure
		$$\Delta \le \frac{2 \delta L}{n} \cdot \frac{\mu \eta M n}{2\delta L} = \mu \eta M.$$
		This implies $\vert\bra{\Phi_R} H_d \ket{\Phi_R} -  E_d\vert\le  \mu\eta M$, completing the proof. \(\qedhere\)
	\end{proof}
	
	\subsection{Bounding the Density of Low-Energy Eigenstates}\label{sec3}
	
	We now convert the existence of the family of orthonormal product states with bounded energy into a statement about the low-energy spectrum of $H_d$. While the states in $\mathcal{S}_r=\{\ket{\Phi_R}: |R|\le r\}$ are not necessarily eigenstates, their energy expectations are bounded. To utilize this, we invoke the Cauchy Interlacing Theorem.
	
	\begin{theorem}[Cauchy Interlacing Theorem~\cite{Horn_Johnson_1985}]\label{th: Cauchy}
		Let $A \in \mathbb{C}^{N\times N}$ be a Hermitian matrix with eigenvalues $\alpha_1\leq\alpha_2\leq \dots\leq\alpha_N$. For any $k \in \{1,\dots , N \}$, consider a principal submatrix $A_k$ of order $k$ (obtained by projecting $A$ onto a $k$-dimensional subspace spanned by a subset of the orthonormal basis). Let its eigenvalues be $\beta_1\leq\beta_2\leq\dots\leq\beta_k$. Then
		$$\alpha_i\leq\beta_i$$holds for all $i \in \{1,\dots , k\}$. 
		In particular, the trace satisfies
		$$\sum_{i=1}^{k}\alpha_i \leq \sum_{i=1}^{k}\beta_i = \operatorname{Tr}(A_k).$$
	\end{theorem}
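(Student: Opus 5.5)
The plan is to prove the interlacing inequalities $\alpha_i \le \beta_i$ from the Courant--Fischer min--max characterization of eigenvalues, and then obtain the trace statement by summing over $i$.

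\textbf{Setup.} Let $V \subseteq \mathbb{C}^N$ be the $k$-dimensional subspace spanned by the chosen orthonormal basis vectors. Let $P: \mathbb{C}^k \to V$ be the corresponding isometry, so that $A_k = P^\dagger A P$ and $P^\dagger P = I_k$.

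\textbf{Courant--Fischer.} For any Hermitian matrix $B$ of order $N$ with eigenvalues $\gamma_1\le\dots\le\gamma_N$,
$$\gamma_i = \min_{\substack{W \subseteq \mathbb{C}^N\\ \dim W = i}} \ \max_{\substack{x\in W\\ \|x\|=1}} x^\dagger B x .$$
I would derive this directly: expand $x$ in an eigenbasis of $B$. Any $i$-dimensional $W$ meets the span of the eigenvectors $i,\dots,N$ nontrivially, which forces the maximum to be at least $\gamma_i$. The span of the first $i$ eigenvectors attains the value $\gamma_i$.

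\textbf{Interlacing.} Apply Courant--Fischer to $A_k$. Take the optimal $i$-dimensional subspace $W' \subseteq \mathbb{C}^k$, namely the span of the first $i$ eigenvectors of $A_k$. Then:
\begin{itemize}
\item $PW'$ is an $i$-dimensional subspace of $\mathbb{C}^N$, because $P$ is an isometry.
\item For unit $y \in W'$, the vector $x = Py$ is also a unit vector, and $x^\dagger A x = y^\dagger A_k y$.
\end{itemize}
Hence
$$\alpha_i \;\le\; \max_{x\in PW',\,\|x\|=1} x^\dagger A x \;=\; \max_{y\in W',\,\|y\|=1} y^\dagger A_k y \;=\; \beta_i .$$
This is exactly $\alpha_i \le \beta_i$ for every $i \in \{1,\dots,k\}$.

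\textbf{Trace bound.} Summing over $i=1,\dots,k$ gives
$$\sum_{i=1}^k \alpha_i \le \sum_{i=1}^k \beta_i,$$
and the right-hand side equals $\operatorname{Tr}(A_k)$ because the trace is the sum of the eigenvalues. In the setting of the paper this trace is $\sum_j \bra{v_j}A\ket{v_j}$ over the basis vectors $v_j$ spanning $V$. That is the form in which the result will be used later, with the vectors $\ket{\Phi_R}$ playing the role of the $v_j$.

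\textbf{Main obstacle.} The only nontrivial step is the dimension-counting argument inside Courant--Fischer, namely showing that $W$ meets the tail eigenspace. This follows from $\dim W + (N-i+1) > N$. Note also that the argument never uses that $V$ is spanned by standard basis vectors: any orthonormal $k$-frame works.
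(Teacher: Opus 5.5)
Your proof is correct: the span of the first $i$ eigenvectors of $A_k$, mapped through the isometry $P$, is an $i$-dimensional subspace of $\mathbb{C}^N$ on which the Rayleigh quotient of $A$ is at most $\beta_i$, so the dimension-counting half of Courant--Fischer gives $\alpha_i \le \beta_i$, and summing gives the trace bound. The paper does not prove this statement itself but quotes it from Horn and Johnson, and your Courant--Fischer argument is the standard one; as you note, it proves the compression form for any orthonormal $k$-frame, which covers the paper's later application to $\mathrm{span}(\mathcal{S}_r)$.
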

	
	\begin{lemma}\label{lem: Cauchy}
		
		Let $\mathcal{S}_r=\{\ket{\Phi_R} : |R|\le r\}$ be the family of orthonormal states identified in Lemma~\ref{lem: Energy}, satisfying
		\begin{align*}
			\vert\bra{\Phi_R} H_d \ket{\Phi_R} - E_d\vert \le \mu\eta M,
		\end{align*}
		where $\mu>0$ and $\eta\in[0,1].$ Then,
		$$ 
		N(E_d + \mu M) \geq  \left(\frac{(1-\eta)\ \mu}{\mu+2}\right) \vert\mathcal{S}_r\vert .
		$$
	\end{lemma}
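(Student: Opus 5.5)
Write $S \coloneqq |\mathcal{S}_r|$ and let $P$ be the orthogonal projector onto the span of $\mathcal{S}_r$; by Lemma~\ref{lem: Orthgonality} this span has dimension exactly $S$. The compression $A_S \coloneqq P H_d P$ is then an $S\times S$ principal submatrix of $H_d$ in a basis extending $\mathcal{S}_r$. By Lemma~\ref{lem: Energy} its diagonal entries satisfy $\bra{\Phi_R}H_d\ket{\Phi_R} \le E_d+\mu\eta M$, so
\[
\operatorname{Tr}(A_S)\le S\,(E_d+\mu\eta M).
\]
Theorem~\ref{th: Cauchy} gives $\sum_{i=0}^{S-1}\lambda_i \le \operatorname{Tr}(A_S)$, where the $\lambda_i$ are the eigenvalues of $H_d$, which equal those of $H$ by isospectrality. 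So the $S$ smallest eigenvalues of $H$ have average at most $E_d+\mu\eta M$. The plan is to combine this upper bound on the average with a lower bound on each eigenvalue, and then apply a Markov-type counting argument.

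For the lower bound I use $\lambda_0 \ge E_d - 2M$. Since $|\bra{\psi}H\ket{\psi}|\le \sum_\alpha\|h_\alpha\|=M$ for every state, we have $\lambda_0\ge -M$ and $E_d\le M$, so $E_d-\lambda_0\le 2M$ (the "2" in $\mu+2$ comes from here). Now set $N\coloneqq N(E_d+\mu M)$ and assume $N<S$; otherwise the claim is immediate because the prefactor $\frac{(1-\eta)\mu}{\mu+2}$ is less than $1$. Split the sum of the first $S$ eigenvalues into two parts:
\begin{itemize}
\item the $N$ eigenvalues $\le E_d+\mu M$, each bounded below by $E_d-2M$;
\item the remaining $S-N$ eigenvalues, each $> E_d+\mu M$.
\end{itemize}
This gives
\[
N(E_d-2M) + (S-N)(E_d+\mu M) < S(E_d+\mu\eta M).
\]
Cancelling $S E_d$ and rearranging yields
\[
S\mu M(1-\eta) < N(\mu+2)M,
\]
that is, $N > \frac{(1-\eta)\mu}{\mu+2}\,S$, which is the claim.

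The main care point is the crude bound $\lambda_0\ge E_d-2M$, which has to be justified through $\|H\|\le M$ together with the variational characterisation. A secondary subtlety is that Lemma~\ref{lem: Energy} only controls $|\bra{\Phi_R}H_d\ket{\Phi_R}-E_d|$, but only its upper side is needed. I also need to check that the Cauchy interlacing / trace inequality applies to a compression onto an arbitrary orthonormal set rather than just coordinate vectors. It does: complete $\mathcal{S}_r$ to an orthonormal basis and write $H_d$ in that basis, and $A_S$ becomes a genuine principal submatrix.
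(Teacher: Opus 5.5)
Your proof is correct and follows essentially the same route as the paper: you apply the Cauchy/Ky Fan trace bound on $\mathrm{span}(\mathcal{S}_r)$, split the $|\mathcal{S}_r|$ smallest eigenvalues at $E_d+\mu M$, and use $E_d-\lambda_0\le 2M$, which follows from $\Vert H\Vert\le M$. The differences are cosmetic: you insert $\lambda_0\ge E_d-2M$ before rearranging rather than after, and you explicitly handle the case $N(E_d+\mu M)\ge|\mathcal{S}_r|$, which the paper's split of the eigenvalue sum silently assumes away.
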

	
	\begin{proof}
		
		Extend $\mathcal{S}_r$ to a complete orthonormal basis with its elements as the first $|\mathcal{S}_r|$ vectors. Applying the Cauchy interlacing theorem to the $|\mathcal{S}_r|\times|\mathcal{S}_r|$ principal submatrix of $H_d$ on $\mathrm{span}(\mathcal{S}_r)$, we obtain diagonal entries $\bra{\Phi_R} H_d \ket{\Phi_R}$. Thus, we have the trace inequality:
		\begin{align*}\label{eq:trace_bound}
			\sum_{i=0}^{|\mathcal{S}_r|-1}\lambda_i \le \sum_{\ket{\Phi_R}\in\mathcal{S}_r}\bra{\Phi_R} H_d \ket{\Phi_R} \le |\mathcal{S}_r|(E_d+\mu\eta M).
		\end{align*}
		
		We now lower-bound the sum of the first $|\mathcal{S}_r|$ eigenvalues. Let $D \coloneqq N(E_d + \mu M)$ be the count of eigenvalues below the threshold. We have
		\begin{align*}
			\sum_{i=0}^{|\mathcal{S}_r|-1}\lambda_i = \sum_{i=0}^{D-1}\lambda_i + \sum_{i=D}^{|\mathcal{S}_r|-1}\lambda_i \ge D \lambda_0 + (|\mathcal{S}_r|-D)(E_d + \mu M).
		\end{align*}
		Combining these bounds we get
		\begin{align*}
			|\mathcal{S}_r|(E_d+\mu\eta M) &\geq  D \lambda_0 + (|\mathcal{S}_r|-D)(E_d + \mu M) \\
			D(E_d-\lambda_0+\mu M) &\geq  ( 1-\eta )\mu M |\mathcal{S}_r| \\
			D &\geq \frac{(1-\eta)\mu M}{E_d-\lambda_0+\mu M}\ |\mathcal{S}_r| \\
			D &\geq \frac{(1-\eta)\mu }{\frac{E_d-\lambda_0}{M}+\mu}\ |\mathcal{S}_r|\,.
		\end{align*}
		
		By the triangle inequality, the spectral norm satisfies $\|H_d\| \le \sum \|h'_\alpha\| = M$. All energy eigenvalues lie within $[-M, M]$, which gives us $E_d-\lambda_0\leq 2M.$ Thus, we have
		$$
			D \geq \frac{(1-\eta)\mu}{\mu+2}\ |\mathcal{S}_r|.\qedhere
		$$
	\end{proof}
	
	We now combine the preceding results to establish the main result of this section: a lower bound on the cumulative spectral count of the effective Hamiltonian $H_d$.
	\begin{theorem}\label{th: Bound}
		Let $H$ be an $n$-qubit $k$-local Hamiltonian with total interaction strength $M$, and let $E_d$ be defined as in Equation~(\ref{eq:ed_def}).
		Let $H_d = (U_d^*)^\dagger H U_d^*$ be the effective $K$-local Hamiltonian generated by an optimal depth-$d$ circuit. For any $\mu>0$, the cumulative spectral count $N(E_d +\mu M)$ of $H_d$ satisfies
		
		$$N(E_d + \mu M)
		=
		\Omega^*\!\left(
		\max_{\substack{\delta\geq 1+\frac{\mu M}{L} \\ \eta\in[0,1]}}
		\left\{
		\frac{\mu\ (1-\eta)}{\mu+2}\
		2^{\,\ent\!\left(\frac{\mu\eta M}{2(\delta-1)L}\right)\frac{\delta-1}{\delta}n}
		\right\}
		\right),$$
		where 
        $L$ be defined as in Equation~(\ref{eq:defL}).
	\end{theorem}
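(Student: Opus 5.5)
The proof assembles Lemma~\ref{lem: Energy}, Corollary~\ref{cor: Sr Lower} and Lemma~\ref{lem: Cauchy}. Fix $\mu>0$, $\eta\in[0,1]$ and $\delta\ge 1+\frac{\mu M}{L}$, and set $r=\lfloor \frac{\mu\eta M n}{2\delta L}\rfloor$ as in Lemma~\ref{lem: Energy}. Then every $\ket{\Phi_R}\in\mathcal{S}_r$ satisfies $|\bra{\Phi_R}H_d\ket{\Phi_R}-E_d|\le \mu\eta M$. Lemma~\ref{lem: Cauchy} therefore gives
\begin{align*}
N(E_d+\mu M)\ \ge\ \frac{(1-\eta)\mu}{\mu+2}\,|\mathcal{S}_r|.
\end{align*}
It remains to lower-bound $|\mathcal{S}_r|$ using Corollary~\ref{cor: Sr Lower}. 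Substituting $r$ into its entropy argument gives
\[
\frac{r\delta}{(\delta-1)n}\approx \frac{\mu\eta M n}{2\delta L}\cdot\frac{\delta}{(\delta-1)n}=\frac{\mu\eta M}{2(\delta-1)L},
\]
which is exactly the argument in the statement.

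Two technical points must be checked.

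\emph{Applicability of Corollary~\ref{cor: Sr Lower}.} The corollary requires $r\le |Q_\delta|/2$. The condition $\delta\ge 1+\mu M/L$ gives $(\delta-1)L\ge \mu M$, so
\[
\frac{\mu\eta M}{2(\delta-1)L}\le \frac{\eta}{2}\le\frac12 .
\]
Hence $r\le \frac{(\delta-1)n}{2\delta}\le |Q_\delta|/2$ by Lemma~\ref{lem: Lower}. This is precisely why the constraint on $\delta$ appears.

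\emph{Effect of the floor.} Because $r$ is a floor, $r/\alpha$ may differ from the displayed ratio by $O(1/n)$, where $\alpha=\frac{(\delta-1)n}{\delta}$. I would handle this using $\binom{\alpha}{r}\ge \binom{\alpha}{r+1}\cdot\frac{r+1}{\alpha-r}$. Alternatively, I would note that $\alpha\,\ent(\cdot)$ changes by at most $O(\log n)$ when its argument shifts by $1/\alpha$, since $\ent'(x)=\log\frac{1-x}{x}$ and $x\ge 1/\alpha$ in the nontrivial case. Either way the loss is a $\mathrm{poly}(n)$ factor absorbed by $\Omega^*$. The degenerate case $r=0$ (for example $\eta=0$) is trivial, since then the bound reduces to a constant and $N\ge 1$.

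\emph{Optimizing over the parameters.} The argument so far holds for every admissible pair $(\delta,\eta)$, so the bound holds for the maximum over them. One point needs care: the polynomial factors hidden in $\Omega^*$ must be uniform in $(\delta,\eta)$. This holds because they come only from the $(|Q_\delta|+1)^{-1}\ge (n+1)^{-1}$ factor and the floor correction above.

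I expect the main obstacle to be this bookkeeping of rounding and monotonicity rather than anything conceptual. In particular, replacing $|Q_\delta|$ by its lower bound $\frac{(\delta-1)n}{\delta}$ uses the monotonicity of $\alpha\mapsto\alpha\,\ent(r/\alpha)$ for $r\le\alpha/2$, which Corollary~\ref{cor: Sr Lower} already provides. Finally, since $H_d$ and $H$ are isospectral, the statement about $N$ holds equally for $H$.
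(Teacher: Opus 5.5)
Your proposal is correct and follows the paper's proof. As the paper does, you fix $(\delta,\eta)$, take $r$ from Lemma~\ref{lem: Energy}, lower-bound $|\mathcal{S}_r|$ via Corollary~\ref{cor: Sr Lower}, convert this to a bound on $N(E_d+\mu M)$ via Lemma~\ref{lem: Cauchy}, and then maximize. Your additional checks are points the paper leaves implicit: that $\delta\ge 1+\mu M/L$ forces $r\le |Q_\delta|/2$, that the floor costs at most a $\mathrm{poly}(n)$ factor, and that the hidden polynomial factors are uniform in $(\delta,\eta)$. They are handled correctly, up to one nit: for $r=0$ with $\eta>0$ the right-hand side is $O(n)$ rather than constant, which is still absorbed by $\Omega^*$.
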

	
	\begin{proof}
		Fix parameters $\delta \ge 1 + \frac{\mu M}{L}$ and $\eta \in [0,1].$ By Lemma~\ref{lem: Energy}, setting the subset size limit to
		$r = \left\lfloor \frac{\mu \eta M n}{2\delta L} \right\rfloor$
		ensures that for every subset $R \subseteq Q_\delta$ with $\vert R\vert\le r$, the perturbed state satisfies $\vert\bra{\Phi_R} H_d \ket{\Phi_R} - E_d\vert \le \mu\eta M.$ By Lemma~\ref{lem: Orthgonality}, the family $\mathcal{S}_r=\{\ket{\Phi_R} : |R|\le r\}$ consists of mutually orthogonal states. Substituting the chosen value of $r$ into the cardinality bound from Corollary~\ref{cor: Sr Lower} yields
		$$|\mathcal{S}_r|
		=
		\Omega^*\!\left(
		2^{\,\ent\!\left(\frac{\mu\eta M}{2(\delta-1)L}\right)\frac{\delta-1}{\delta}n}
		\right).$$
		
		Finally, Lemma~\ref{lem: Cauchy} relates the size of this orthonormal family to the low-energy spectrum of $H_d$:
		$$
		N(E_d + \mu M) \geq  \frac{(1-\eta)\ \mu}{\mu+2} \vert\mathcal{S}_r\vert .
		$$
		Optimizing over all admissible values of $\delta$ and $\eta$ completes the proof. \(\qedhere\)
	\end{proof}
	
	As established earlier, since $U_d^*$ is a unitary operator, the Hamiltonians $H_d = (U_d^*)^\dagger H U_d^*$ and $H$ are isospectral. Hence, the lower bound for $N(E_d+\mu M)$ with respect to $H_d$ in Theorem~\ref{th: Bound} applies identically to the cumulative spectral count of the original Hamiltonian $H$. By taking $\eta=1/2$ and $\delta=2$, we obtain the following simpler lower bound:
	
	\begin{corollary}\label{cor:main}
        Let $H$ be an $n$-qubit $k$-local Hamiltonian with total interaction strength $M$, and let $E_d$ be defined as in Equation~(\ref{eq:ed_def}). For any $\mu =\Omega(1/\mathrm{poly}(n))$, the cumulative spectral count of $H$ satisfies:
		$$N(E_d + \mu M) = \Omega^* \left( 2^{\ent\left(\frac{\mu }{2^{d+2} k}\right)\frac{n}{2}} \right).$$
	\end{corollary}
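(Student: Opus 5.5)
The plan is to specialize Theorem~\ref{th: Bound} to the parameters $\delta = 2$ and $\eta = 1/2$, and then simplify the resulting exponent using two elementary bounds on $L$. We then transfer the bound from $H_d$ to $H$ by isospectrality, as already noted after Theorem~\ref{th: Bound}.

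\textbf{Step 1: bounds on $L$.} First we record
$$M \le L \le K M = 2^d k M .$$
For the lower bound, each term $h'_\alpha$ acts on at least one qubit, so it is counted at least once in $\sum_s e(s)$. Terms proportional to the identity only shift the spectrum and $E_d$ equally, so they can be discarded. For the upper bound, each term acts on at most $K$ qubits, so it is counted at most $K$ times.

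\textbf{Step 2: admissibility of the parameters.} The choice $\delta=2$ is admissible in Theorem~\ref{th: Bound} exactly when $\mu M \le L$. By the lower bound $L \ge M$, this holds whenever $\mu \le 1$. We also need the hypothesis $r \le |Q_\delta|/2$ of Corollary~\ref{cor: Sr Lower}. Here $|Q_2| \ge n/2$ by Lemma~\ref{lem: Lower}, and
$$r = \left\lfloor \frac{\mu M n}{8L} \right\rfloor \le \frac{\mu n}{8},$$
so the hypothesis holds for $\mu \le 2$.

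\textbf{Step 3: substitution.} With $\delta=2$ and $\eta=1/2$:
\begin{itemize}
\item The prefactor becomes $\mu/(2(\mu+2))$. Since $\mu = \Omega(1/\mathrm{poly}(n))$, this is $\Omega(1/\mathrm{poly}(n))$ and is absorbed into $\Omega^*$.
\item The factor $(\delta-1)/\delta$ multiplying $n$ becomes $1/2$.
\item The entropy argument becomes $x \coloneqq \mu M/(4L)$.
\end{itemize}
By Step 1, $x$ satisfies
$$\frac{\mu}{2^{d+2}k} = \frac{\mu}{4K} \le x \le \frac{\mu}{4} \le \frac12 .$$
Because the binary entropy $\ent$ is increasing on $[0,1/2]$, we get $\ent(x) \ge \ent(\mu/(2^{d+2}k))$. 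This yields
$$N(E_d+\mu M) = \Omega^*\!\left(2^{\ent\left(\frac{\mu}{2^{d+2}k}\right)\frac n2}\right).$$

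\textbf{Main obstacle: the range of $\mu$.} The argument above covers $\mu \le 1$, which is the regime the algorithm actually uses, since $\varepsilon$ is small. For $\mu \ge 2$ the claim is trivial: $E_d + \mu M \ge -M + 2M = M \ge \lambda_{2^n-1}$, so $N(E_d+\mu M) = 2^n$.

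The intermediate range $1 < \mu < 2$ with $L < \mu M$ is the delicate case, because $\delta = 2$ is then not admissible. My first attempt would be to exploit monotonicity of $N$ in $E$: reuse the $\mu=1$ bound and compare $\ent(1/(4K))$ with $\ent(\mu/(4K))$. If that loses a constant factor in the exponent, I would instead restrict the corollary to $\mu \le 1$, which suffices for Theorem~\ref{th:main_intro}.
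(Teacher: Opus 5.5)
Your proposal is correct and follows the paper's proof: set $\delta=2$, $\eta=1/2$ in Theorem~\ref{th: Bound}, use $L\le KM=2^d kM$ together with the monotonicity of $\ent$ on $[0,1/2]$, and absorb the prefactor $\mu/(2(\mu+2))$ into $\Omega^*$. You also make explicit several checks the paper leaves implicit: admissibility of $\delta=2$ via $L\ge M$, the condition $r\le |Q_\delta|/2$, and the entropy argument staying below $1/2$. The paper itself simply assumes $\mu\le 1$ in a footnote, so restricting to $\mu\le 1$ (plus your trivial case $\mu\ge 2$) loses nothing relative to the paper's argument.
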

	
	\begin{proof}
		We substitute specific parameters into the bound from Theorem~\ref{th: Bound}. Setting $\eta=1/2$ and $\delta=2$ yields an exponent of $\ent\left(\frac{\mu M}{4L}\right)\frac{n}{2}$. 
        
        Recall that $L = \sum_{s=1}^n \sum_{\alpha \in \mathcal{I}(s)} \|h'_\alpha\|$. Since each of the $m$ interaction terms $h'_\alpha$ in $H_d$ acts on at most $K$ qubits, summing over all sites counts each interaction term at most $K$ times. Thus, $L \le K \sum_{\alpha} \|h'_\alpha\| = KM$. Because the binary entropy function $\ent(x)$ is monotonically increasing for $x \in [0, 1/2]$, replacing $L$ with its upper bound $KM$ strictly maintains the inequality. Substituting $K = 2^d k$, the argument of the entropy function becomes
		$$ \frac{\mu M}{4(KM)} = \frac{\mu}{4K} = \frac{\mu}{2^{d+2}k}. $$
		Finally, because $\mu =\Omega(1/\mathrm{poly}(n))$, the prefactors $\frac{\mu(1-\eta)}{\mu+2}$ are entirely absorbed by the $\Omega^*(\cdot)$ notation,\footnote{Here we subtly make the assumption $\mu\leq 1$, which is true for large $n$.} yielding the final bound. \(\qedhere\)
	\end{proof}

\section{Faster Low-Energy Estimation and State Preparation}\label{bigsec3}

Leveraging the exponential density of states from Corollary~\ref{cor:main}, we now construct quantum algorithms for low-energy estimation and state preparation. 

We utilize the low-energy state preparation algorithm introduced in~\cite[Section 6]{lintong}. To prepare with high probability a state $\ket{\Psi}$ satisfying $\bra{\Psi} \tilde{H} \ket{\Psi} \le x$ for some Hamiltonian $\tilde{H}$ and target $x \ge 0$, the algorithm requires an initial state $\ket{\phi_0}$ that has an overlap of at least $\gamma$ with the low-energy subspace of $\tilde{H}$ corresponding to energies below $x - y$, for some $y>0$. The algorithm in \cite{lintong} assumes access to a unitary $U_I$ that prepares the initial state $\ket{\phi_0}$, and a block-encoding $U_{\tilde{H}}$ of the Hamiltonian. Treating these as black boxes, the query complexity is bounded by $O^*\left(\frac{1}{y\sqrt{\gamma}}\right)$ calls to $U_{\tilde{H}}$, and $O^*\left(\frac{1}{\sqrt{\gamma}}\right)$ calls to $U_I$.

In our setting, we consider the extended $2n$-qubit Hamiltonian $\tilde{H} = H \otimes I_{\text{anc}}$, where $H$ is our original $n$-qubit $k$-local Hamiltonian acting on the ``system'' register, and $I_{\text{anc}}$ is the identity matrix acting on an $n$-qubit ``ancilla'' register. Since $H$ is $k$-local, it is sparse and admits an efficient block-encoding. Extending $H$ to $\tilde{H}$ preserves the sparsity. This implies that each query to block-encoding $U_{\tilde{H}}$ can be implemented efficiently, in time $O(\mathrm{poly}(n))$. 

For the initial state we choose the \emph{maximally entangled state} on $2n$ qubits:
$$ \ket{\phi_0} = \frac{1}{\sqrt{2^n}} \sum_{k=0}^{2^n-1} |k\rangle_{\text{sys}} \otimes |k\rangle_{\text{anc}}, $$
where $\{\ket{k}\}_{k=0}^{2^n-1}$ is any orthonormal basis. This state can be efficiently prepared using a polynomial-depth circuit of Hadamard and CNOT gates, meaning the time cost of querying $U_I$ once is $O(\mathrm{poly}(n))$ and can be asymptotically neglected. Consequently, the overall time complexity of the algorithm matches its query complexity.

\begin{lemma}\label{lem: final} 
	Let $H$ be an $n$-qubit $k$-local Hamiltonian, $E_d$ be defined as in Equation~(\ref{eq:ed_def}), and $\tilde{H} = H \otimes I_{\text{anc}}$ be the extended Hamiltonian on $2n$ qubits. If $\ket{\phi_0}$ is the maximally entangled state on $2n$ qubits, its overlap $\gamma$ with the low-energy subspace of $\tilde{H}$ corresponding to energies below $E_d+\mu M$ satisfies
	$$ \gamma = \Omega^* \left( 2^{\ent\left(\frac{\mu }{2^{d+2}\ k}\right)\frac{n}{2}-n} \right)\,. $$
\end{lemma}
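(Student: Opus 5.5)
The plan is to compute $\gamma$ exactly using the basis independence of the maximally entangled state, and then apply Corollary~\ref{cor:main}. First I will identify the relevant projector. Let $E \coloneqq E_d+\mu M$. Since $\tilde{H}=H\otimes I_{\text{anc}}$ has eigenvalues $\lambda_i$ with eigenvectors $\ket{\phi_i}\otimes\ket{j}$, its low-energy projector is $\tilde{\Pi}_{\le E}=\Pi_{\le E}\otimes I_{\text{anc}}$. Here $\Pi_{\le E}$ is the projector for $H$ defined in Section~\ref{sec: prelim}. The overlap is then
$$\gamma=\bra{\phi_0}(\Pi_{\le E}\otimes I_{\text{anc}})\ket{\phi_0}.$$

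Next I will evaluate this expression. Write $\ket{\phi_0}=2^{-n/2}\sum_k\ket{k}\ket{k}$. I will use the standard identity $\bra{\phi_0}(A\otimes I)\ket{\phi_0}=2^{-n}\operatorname{Tr}[A]$. It follows directly because cross terms $\braket{k|k'}$ on the ancilla vanish unless $k=k'$, which leaves $2^{-n}\sum_k\bra{k}A\ket{k}$. This holds for any orthonormal basis, so the choice of basis in the definition of $\ket{\phi_0}$ does not matter. Equivalently, the reduced state of the system register is $I/2^n$. Applying this with $A=\Pi_{\le E}$ gives
$$\gamma=\frac{\operatorname{Tr}[\Pi_{\le E}]}{2^n}=\frac{N(E_d+\mu M)}{2^n},$$
since the rank of $\Pi_{\le E}$ is exactly the cumulative spectral count $N(E)$.

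Finally, I will invoke Corollary~\ref{cor:main}, which applies to $H$ by isospectrality with $H_d$. It gives $N(E_d+\mu M)=\Omega^*\bigl(2^{\ent(\mu/(2^{d+2}k))\,n/2}\bigr)$ for $\mu=\Omega(1/\mathrm{poly}(n))$. Dividing by $2^n$ yields the claimed bound $\gamma=\Omega^*\bigl(2^{\ent(\mu/(2^{d+2}k))n/2-n}\bigr)$. The polynomial factors hidden in $\Omega^*$ are unaffected by this division.

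There is no real obstacle here. The only points requiring care are the convention of ``below'' versus ``at most'' $E$, which matches the definition of $N(E)$, and the assumption $\mu\le 1$ that Corollary~\ref{cor:main} already makes.
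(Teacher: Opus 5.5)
Your proposal is correct and follows the paper's proof: both identify the projector as $P_{\le E_d+\mu M}\otimes I_{\text{anc}}$, obtain $\gamma = N(E_d+\mu M)/2^n$ exactly, and then invoke Corollary~\ref{cor:main}. The one difference is in how $\gamma$ is computed, and yours is slightly cleaner: you use the basis-independent identity $\bra{\phi_0}(A\otimes I)\ket{\phi_0}=2^{-n}\operatorname{Tr}[A]$, whereas the paper rewrites $\ket{\phi_0}$ as $2^{-n/2}\sum_i\ket{\lambda_i}\ket{\lambda_i}$, which strictly requires complex-conjugated eigenvectors on the ancilla.
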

\begin{proof}
	Let $H = \sum_{i=0}^{2^n-1} \lambda_i \ket{\lambda_i}\bra{\lambda_i}$ be the spectral decomposition of $H$. The eigenstates of $\tilde{H} = H \otimes I_{\text{anc}}$ are $\ket{\lambda_i}_{\text{sys}} \otimes \ket{x}_{\text{anc}}$ with identical eigenvalues $\lambda_i$, for any orthonormal ancilla basis state $\ket{x}$. Let $\Pi_{\le E_d+\mu M}$ be the projector onto the subspace of $\tilde{H}$ with energies at most $E_d+\mu M$. We can write this projector as $P_{\le E_d+\mu M} \otimes I_{\text{anc}}$, where $P_{\le E_d+\mu M} = \sum_{i : \lambda_i \le E_d+\mu M} \ket{\lambda_i}\bra{\lambda_i}$.
	
	We expand the initial state $\ket{\phi_0}$ in the eigenbasis of $H$:
	$$ \ket{\phi_0} = \frac{1}{\sqrt{2^n}} \sum_{i=0}^{2^n-1} \ket{\lambda_i}_{\text{sys}} \otimes \ket{\lambda_i}_{\text{anc}}. $$
	The overlap between $\ket{\phi_0}$ and the subspace of energies below $E_d+\mu M$ satisfies:
	\begin{align*}
		\gamma & =  \bra{\phi_0}\Pi_{\le E_d+\mu M} \ket{\phi_0} \\
		& = \bra{\phi_0}(P_{\le E_d+\mu M} \otimes I_{\text{anc}}) \ket{\phi_0} \\
		& = \frac{1}{2^n} \sum_{i : \lambda_i \le E_d+\mu M} 1.
	\end{align*}
	This sum is exactly the number of eigenstates of $H$ with energy below $E_d+\mu M$. By Corollary~\ref{cor:main}, we can bound this density of states:
	$$ \gamma = \frac{\Omega^* \left(  2^{\ent\left(\frac{\mu }{2^{d+2}\ k}\right)\frac{n}{2}} \right)}{2^n} = \Omega^* \left( 2^{\ent\left(\frac{\mu }{2^{d+2}\ k}\right)\frac{n}{2}-n} \right). \qedhere $$
\end{proof}

	The exponential overlap with the low-energy subspace guaranteed by Lemma~\ref{lem: final} helps us achieve a faster quantum algorithm for low-energy state preparation. As a direct consequence of this state preparation algorithm, we can estimate this low-energy value with inverse-polynomial precision without altering the asymptotic runtime: 

    \begin{reptheorem}
        \maintheoremtext
    \end{reptheorem}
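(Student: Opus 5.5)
We plan to instantiate the low-energy state preparation algorithm of~\cite[Section 6]{lintong} with the extended Hamiltonian $\tilde{H}=H\otimes I_{\text{anc}}$, the maximally entangled initial state $\ket{\phi_0}$, and the overlap bound of Lemma~\ref{lem: final}. We split the error budget $\varepsilon M$ into two halves. We set $\mu=\varepsilon/2$ and take the target $x=E_d+\varepsilon M$ with gap $y=\varepsilon M/2$. Then $x-y=E_d+\mu M$, and the overlap required by the algorithm is exactly the overlap $\gamma$ of $\ket{\phi_0}$ with the subspace of energies at most $E_d+\mu M$. If the algorithm of~\cite{lintong} requires $x\ge 0$, or a normalized block-encoding, we first shift and rescale $H$ by $M$, since $\|H\|\le M$. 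This shift is an affine change that preserves all the relative statements. After the rescaling, $y$ becomes $\varepsilon/2=\Omega(1/\mathrm{poly}(n))$.

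The runtime is then computed as follows. Lemma~\ref{lem: final} gives
\[
\gamma=\Omega^*\!\left(2^{\ent\left(\frac{\varepsilon}{2^{d+3}k}\right)\frac{n}{2}-n}\right).
\]
Hence
\[
1/(y\sqrt{\gamma})=O^*\!\left(2^{\frac{n}{2}\left(1-\frac12\ent\left(\frac{\varepsilon}{2^{d+3}k}\right)\right)}\right).
\]
This has $2^{d+3}$ in place of $2^{d+2}$. To get the stated constant, we instead take $\mu=\varepsilon(1-o(1))$ and $y=\varepsilon M/\mathrm{poly}(n)$. For example, $\mu=\varepsilon-\varepsilon/n$ and $y=\varepsilon M/n$. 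The factor $1/y$ is still polynomial. By continuity of $\ent$, we have $\frac n2\ent(\mu/2^{d+2}k)=\frac n2\ent(\varepsilon/2^{d+2}k)-O(n\cdot\varepsilon/n\cdot\log(\cdot))$. The difference is at most $O(\log n)$ in the exponent, which is polynomial and absorbed by $O^*$. Each call to $U_I$ and $U_{\tilde H}$ costs $\mathrm{poly}(n)$, so the total time is as claimed. The output state $\ket{\Psi}$ on $2n$ qubits has $\bra{\Psi}\tilde H\ket{\Psi}\le E_d+\varepsilon M$. Tracing out the ancilla gives an $n$-qubit $\rho$ with $\operatorname{Tr}[H\rho]=\bra{\Psi}\tilde H\ket{\Psi}$, which yields the state-preparation claim.

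For the estimate $\widehat{E}$, we perform energy estimation on the prepared state, either via the guarantee of~\cite{lintong} or by Hadamard-test or amplitude-estimation sampling of $\operatorname{Tr}[H\rho]$. We estimate to additive precision $\varepsilon M/\mathrm{poly}(n)$, which costs $\mathrm{poly}(n)$ copies. To force the window $[\lambda_0,E_d+\varepsilon M]$, we run the preparation with a slightly tightened target, say $E_d+\varepsilon M-\varepsilon M/n$. We then report the estimate shifted upward by its precision. This makes $\widehat E\ge \operatorname{Tr}[H\rho]\ge\lambda_0$ hold with high probability, while $\widehat E\le E_d+\varepsilon M$. The tightening again only perturbs the entropy argument by $O(1/n)$ relatively and is absorbed by $O^*$.

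The main obstacle is bookkeeping rather than new mathematics. We must check that the black-box requirements of~\cite{lintong} match our setting: the normalization of the block-encoding, the sign convention $x\ge 0$, and the high success probability. We must also check that the splitting of $\varepsilon$ into the overlap threshold and the gap $y$ (and the estimation slack) loses only polynomial factors. The continuity argument for $\ent$ near the small argument $\varepsilon/(2^{d+2}k)$ is the delicate point. It works because $\ent'(p)=\log\frac{1-p}{p}$ is only $O(\log n)$ for $p=\Omega(1/\mathrm{poly}(n))$.
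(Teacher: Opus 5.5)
Your proposal is correct and follows essentially the same route as the paper's proof. Both apply the low-energy preparation algorithm of \cite{lintong} to $H\otimes I_{\text{anc}}$ with the maximally entangled input, use Lemma~\ref{lem: final} for the overlap with $\mu=(1-1/n)\varepsilon$ and $y=\varepsilon M/n$, trace out the ancilla, and estimate $\operatorname{Tr}[H\rho]$ from polynomially many prepared copies; your extra bookkeeping makes explicit several points the paper's proof leaves implicit, namely the mean-value bound justifying the replacement of $\ent(\mu/2^{d+2}k)$ by $\ent(\varepsilon/2^{d+2}k)$, the rescaling by $M$ for the block-encoding and the condition $x\ge 0$, and the tightened target with an upward-shifted estimate that keeps $\widehat{E}$ inside the window.
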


\begin{proof}
	\noindent\textbf{(i) Low-Energy State Preparation:}
	We apply the low-energy state preparation algorithm from~\cite[Section 6]{lintong} to the extended $2n$-qubit Hamiltonian $\tilde{H} = H \otimes I_{\text{anc}}$, where $H$ is our $n$-qubit $k$-local Hamiltonian acting on the $n$ system qubits and $I_{\text{anc}}$ is the identity on the $n$ ancilla qubits. Further, we take the maximally entangled state on $2n$ qubits as the initial state.
    
	
	Set
	$$x = E_d + \varepsilon M, \qquad y = \frac{\varepsilon M}{n}, \qquad \mu = \left(1 - \frac{1}{n}\right)\varepsilon.$$
	Then
	$$x - y = E_d + \varepsilon M - \frac{\varepsilon M}{n} = E_d + \mu M.$$
	
	By Lemma~\ref{lem: final}, the maximally entangled state $\ket{\phi_0}$ has an overlap of at least $\gamma$ with the subspace spanned by eigenstates of $\tilde{H}$ with energies at most $E_d + \mu M = x - y$. Hence, the input conditions of the algorithm are satisfied. Therefore, with high probability, the algorithm outputs a $2n$-qubit pure state $\ket{\Psi}$ such that
	$$\bra{\Psi} \tilde{H} \ket{\Psi} = \bra{\Psi} (H \otimes I_{\text{anc}}) \ket{\Psi} \le x = E_d + \varepsilon M.$$
	
	To obtain a state on the original $n$-qubit system, we trace out the ancilla register. Let $\rho = \operatorname{Tr}_{\text{anc}}[\ket{\Psi}\bra{\Psi}]$ be the reduced density matrix on the system qubits. By the properties of the partial trace\footnote{Recall that by the definition of the partial trace, we have $\operatorname{Tr}[(M_A\otimes I_B)\rho_{AB}]=\operatorname{Tr}_A[M_A\cdot\operatorname{Tr}_B[\rho_{AB}]]$.}, we get
	$$ \bra{\Psi} (H \otimes I_{\text{anc}}) \ket{\Psi} = \operatorname{Tr}[(H \otimes I_{\text{anc}})\ket{\Psi} \bra{\Psi}] = \operatorname{Tr}_{\text{sys}}[H\cdot\operatorname{Tr}_{\text{anc}}[\ket{\Psi} \bra{\Psi}]] = \operatorname{Tr}_{\text{sys}}[H\rho].$$
	This implies
	$$\operatorname{Tr}_{\text{sys}}[H \rho] \le E_d + \varepsilon M.$$
	Thus, we have successfully prepared an $n$-qubit mixed state $\rho$ satisfying the target energy bound.
	
	The running time of the algorithm is bounded by the total number of queries to $U_I$ and $U_{\tilde{H}}$:
	$$
	O^*\!\left(\frac{1}{\sqrt{\gamma}}\right) + O^*\!\left(\frac{1}{y\sqrt{\gamma}}\right) 
	= 
	O^*\!\left(\frac{1}{\sqrt{\gamma}}\right) + O^*\!\left(\frac{n}{\varepsilon M \sqrt{\gamma}}\right) 
	= 
	O^*\!\left(\frac{1}{\sqrt{\gamma}}\right),
	$$
	where the last step follows from the fact that $\varepsilon = \Omega(1/\mathrm{poly}(n))$. Substituting the bound on $\gamma$ from Lemma~\ref{lem: final}, we obtain the final time complexity
	$$T_{\text{prep}} = O^*\!\left( 2^{\frac{n}{2}\left(1 - \frac{1}{2}\ent\left(\frac{\varepsilon}{2^{d+2}k}\right)\right)} \right).$$
	Because the lowest possible energy in the system is the ground state energy $\lambda_0$, the expected energy of the prepared state $\rho$ is strictly guaranteed to lie within the interval $[\lambda_0,\ E_d+\varepsilon M]$.

    \paragraph*{(ii) Low-Energy Estimation:} To estimate the energy of the prepared state, we rely on the linearity of expectation values. We decompose the Hamiltonian into its local interaction terms $H = \sum_{\alpha=1}^m h_\alpha$. The total energy expectation is given by 
	$$ \operatorname{Tr}[H\rho] = \sum_{\alpha=1}^{m}\operatorname{Tr}[h_\alpha\rho]. $$
	
	We utilize the state preparation algorithm from part (i) of this proof as a subroutine. Since every successful run prepares the system register in this fixed state $\rho$, we can independently sample the local terms $h_\alpha$ across different successful runs.
	
	To achieve an inverse-polynomial precision $\varepsilon = \Omega(1/\mathrm{poly}(n))$ for the total energy estimate, we require a number of samples $S = O(\mathrm{poly}(n))$ for each local term. Because the Hamiltonian consists of $m = O(\mathrm{poly}(n))$ local terms, the total number of successful state preparations required is $m \times S = O(\mathrm{poly}(n))$. 
	
	Let $p_{\text{succ}}$ be the success probability of a single run of the state preparation algorithm in part (i). The expected time to achieve one successful preparation is $T_{\text{prep}} / p_{\text{succ}}$. Because the algorithm guarantees success with high probability, $1/p_{\text{succ}}$ is bounded by a constant. Therefore, the total expected runtime to gather all necessary samples for the estimation is
	$$ T_{\text{est}} = O(\mathrm{poly}(n)) \times T_{\text{prep}}. $$
	This multiplicative polynomial overhead is asymptotically absorbed by the $O^*(\cdot)$ notation. Thus, the low-energy estimation is achieved with the same time complexity as the state preparation. Consequently, 
    $$T_{\text{est}} = O^*\!\left( 2^{\frac{n}{2}\left(1 - \frac{1}{2}\ent\left(\frac{\varepsilon}{2^{d+2}k}\right)\right)} \right). \qedhere$$
\end{proof}

	\section{Conclusion}\label{bigsec4}

    While the square-root barrier of unstructured Grover search has recently been broken for general $k$-local Hamiltonians \cite{buhr1}, we have demonstrated that stricter asymptotic improvements are achievable for the class of Hamiltonians with a low-depth ground state. By shifting the energy threshold from the ground energy $\lambda_0$ to the energy threshold achievable by low-depth circuits $E_d$, we have established a faster state preparation and energy estimation method for Hamiltonians whose ground energy is well-approximated by low-depth states. Our work proves that for such Hamiltonians, targeting the low-depth state energy allows for a runtime scaling governed by the binary entropy function, providing a speedup over the rational-function scaling established in \cite{buhr1}.
	
	The mechanism driving this speedup is the locality-preserving nature of low-depth circuits. Because these unitaries $U_d$ result in a bounded lightcone, the effective Hamiltonian $H_d = (U_d^*)^\dagger H U_d^*$ remains $K$-local for some constant $K$. This structural preservation allows us to identify an exponentially dense manifold of low-energy product states, which gives rise to an exponentially dense low-energy subspace of $H_d$. Since $H_d$ and $H$ are isospectral due to being unitarily equivalent, the low-energy subspace of $H$ is also exponentially dense. This property serves as the ``engine'' of our faster, low-energy state-preparation algorithm. A low-depth state can be fully described by its low-depth circuit recipe. Therefore, conceptually, our results highlight the distinction between strongly entangled quantum states and those with efficient classical descriptions.
	
	Looking ahead, several compelling directions for future research emerge from this work:
	\begin{itemize}
		\item Locality-Preserving Unitaries: A natural extension is to identify the full class of unitaries $U$, beyond low-depth circuits, that preserve the locality of $U^\dagger H U$. Any transformation that maintains a bounded support for the Hamiltonian local terms could potentially yield a similar asymptotic speedup.
		
		\item While our present work focuses on local perturbations of the all-zero state, a natural extension is to consider their effect on tensor network states, such as Matrix Product States (MPS). A key question is whether the exponential decay of correlations exhibited by these states can be leveraged to establish an analogous density-of-states argument.
	
	\end{itemize}

	In summary, by tailoring our algorithmic goals to the inherent structure of low-depth states, we have shown that the landscape of low-energy state preparation and estimation is more accessible than previously realized for a wide variety of quantum many-body systems.
        
	\section*{Acknowledgments}
    The authors are very grateful to Harry Buhrman, Sevag Gharibian, Zeph Landau and Norbert Schuch for helpful discussions.  
	FLG and RM are supported by JSPS KAKENHI grant Nos.~24H00071, 25K24674 and 25K24465, MEXT Q-LEAP grant No.~JPMXS0120319794, JST ASPIRE grant No.~JPMJAP2302 and JST CREST grant No.~JPMJCR24I4. ST is supported by JSPS KAKENHI grant No.~JP22K11909.
    \vspace{2em}
	
	\bibliographystyle{alpha}
	\bibliography{ref.bib}

\end{document}